\theoremstyle{plain}
\newtheorem{theorem}{Theorem}[section]
\newtheorem{lemma}[theorem]{Lemma}
\newtheorem*{theorem*}{Theorem}
\newtheorem{fact}[theorem]{Fact}
\newtheorem{corollary}[theorem]{Corollary}
\newtheorem{conjecture}{Conjecture}
\newtheorem{proposition}[theorem]{Proposition}
\newtheorem{definition}[theorem]{Definition}
\theoremstyle{remark}
\newtheorem{remark}[theorem]{Remark}
\newcommand{\R}{\mathbb{R}}
\newcommand{\N}{\mathbb{N}}
\renewcommand{\vec}[1]{\mathbf{#1}}
\newcommand{\reg}[1]{\mathsf{#1}}
\newcommand{\poly}{\mathrm{poly}}
\newcommand{\negl}{\mathrm{negl}}
\newcommand*{\interact}{\mathord{\leftrightarrows}}
\newcommand{\Sim}{\mathrm{Sim}}
\newcommand{\secparam}{\lambda}
\newcommand{\sk}{\mathrm{sk}}
\newcommand{\pp}{\mathrm{pp}}
\newcommand{\Enc}{\mathrm{Enc}}
\newcommand{\Dec}{\mathrm{Dec}}
\newcommand{\Gen}{\mathrm{Gen}}
\newcommand{\Com}{\mathrm{Com}}
\newcommand{\Setup}{\mathrm{Setup}}
\newcommand{\View}{\mathrm{View}}
\renewcommand{\vec}[1]{\mathbf{#1}}
\newcommand{\Commit}{\mathrm{Commit}}
\newcommand{\Reveal}{\mathrm{Reveal}}
\newcommand{\greg}[1]{\textcolor{blue}{(\bf Greg: #1)}}
\newcommand{\pos}{\textsf{loc}}
\Crefname{claim}{Claim}{Claims}
\Crefname{protocol}{Protocol}{Protocols}
\Crefname{protocolx}{Protocol}{Protocols}
\crefname{step}{Step}{Steps}
\Crefname{algorithm}{Protocol}{Protocols}
\title{Private proofs of when and where}
\author{
\begin{tabular}{cccc}
    Uma Girish & Greg Gluch & Shafi Goldwasser & Tal Malkin\\
    \small Columbia & \small UC Berkeley & \small UC Berkeley & \small Columbia \\ 
    \\
     & Leo Orshansky & Henry Yuen & \\
     & \small Columbia & \small Columbia &
\end{tabular}
}
\begin{document}

\date{}

\maketitle
\begin{abstract}
Position verification schemes are interactive protocols where entities prove their physical location to others; this enables interactive proofs for statements of the form ``I am at a location $L$.'' Although secure position verification cannot be achieved with classical protocols (even with computational assumptions), they are feasible with \emph{quantum} protocols.

In this paper we introduce the notion of 
\emph{zero-knowledge position verification}, 
which extends position verification in two ways: 
\begin{enumerate}
    \item enabling entities to prove more sophisticated statements about their locations at different times (for example, proving statements about one's position in the {\it past} ``I was \emph{not} near location $L$ at noon yesterday'').  
    \item maintaining privacy for any other detail about their true location besides the statement they are proving.
\end{enumerate}
% We call this \emph{zero-knowledge position verification}, and 
We construct zero-knowledge position verification 
% for proving position within any finite region of spacetime,  
from secure position verification and post-quantum one-way functions. 
The central tool in our construction is a primitive we call \emph{position commitment}, which allows entities to privately \emph{commit} to their physical position in a particular moment, which is then revealed at some later time. 

\iffalse
Position verification schemes are interactive protocols where entities prove their physical location to others; this enables interactive proofs for statements of the form ``I am at a location $L$.'' Although secure position verification cannot be achieved with classical protocols (even with computational assumptions), it has been shown that they are feasible with \emph{quantum} protocols. In this paper we generalize the notion of position verification to enable entities to prove more sophisticated statements about their locations at different times (for example, ``I was \emph{not} near location $L$ at noon yesterday''), while maintaining privacy for any other details about their true location in the past. We call this \emph{zero-knowledge position verification}, and present a construction from secure position verification and one-way functions. The central tool in our construction is a primitive we call \emph{position commitments}, which allow entities to privately \emph{commit} to their physical position in a particular moment, which is then revealed at some later time. 
\fi 

\end{abstract}

\section{Introduction}
\label{sec:intro}

 %Theoretical computer science and cryptography have discovered far-reaching extensions of the notion of \emph{proof}: they can be zero-knowledge, interactive, probabilistically checkable, economically rational, or quantum. All these versions of proof, however, are ultimately furnished for mathematical or information-theoretic statements, such as ``The Riemann Hypothesis is true'', ``The 3SAT formula $\varphi$ is satisfiable'', and ``I know the secret key that will decrypt ciphertext $c$.'' 

Position verification protocols enable an entity (which we call the \emph{prover}) to prove the statement ``I am at location $L$'' to a set of verifiers. These protocols are analogous to interactive proofs in complexity theory and cryptography, except the statements proven are about a \emph{physical property} of the prover, rather than the logical properties of some mathematical object (e.g., ``The graph $\varphi$ is $3$-colorable''). Position verification gives rise to intriguing new cryptographic possibilities, such as using someone's geospatial location as a secure credential for decrypting a message or performing a private computation~\cite{chandran2009position,buhrman2014position}. 

The idea behind position verification is that a group of verifiers can exchange signals with a purported prover in a purported location, and based on the timing of the received signals, obtain trustworthy proof of the prover's location. Without strong assumptions, such proofs cannot be obtained in a purely classical setting: a number of papers~\cite{chandran2009position,kent2011quantum,buhrman2014position} showed that it is always possible for colluding adversaries to efficiently spoof any classical position verification protocol. This classical impossibility result holds even if we make computational assumptions; thus for position verification to be possible one has to consider different models, such as trusted hardware models~\cite{gabber1998prove},  bounded storage models~\cite{chandran2009position}, or the quantum model~\cite{kent2011quantum,buhrman2014position,malaney2016quantum}. The quantum position verification (QPV) model is perhaps the most compelling, as it leverages physical principles such as unclonability and uncertainty to potentially achieve secure location verification.

However, even if secure position verification is achievable, it raises a fundamental privacy concern: an entity's position in space and time is one of its most sensitive physical attributes. In many applications, the goal is not to reveal a precise location, but rather to prove some statement about it. For example, one might wish to prove to a judge that one was \emph{not} near a crime scene (``I was not at location $L$.''), or to certify that a nation-state's armaments remained within its borders (``The weapon was always inside a region.'').

%We ask whether such proofs can be made private: can one design a protocol that reveals nothing beyond what is strictly necessary for the downstream task?

In this work we ask:
\begin{center}
\textit{Can position verification protocols be made private?}
\end{center} 
We thus ask whether there exist proofs that \emph{don't} reveal anything additional about the locations of the involved entities (other than what is required by the application). In other words, is it possible to obtain \emph{zero-knowledge position proofs}? This would be analogous to traditional zero-knowledge proofs in complexity theory and cryptography.

We argue that many natural settings would benefit from zero-knowledge location-proving protocols. A striking example is a recent investigation \cite{strava2025} which revealed that presidential bodyguards inadvertently exposed the movements of heads of state by using the fitness-tracking app Strava.  
A protocol capable of proving statements such as ``Alice completed a run with a particular \emph{shape} within a given time,'' \emph{while revealing nothing about the run's geographic location}, would directly mitigate the vulnerability highlighted in \cite{strava2025}.  
Other potential applications include \emph{privacy-preserving verification of compliance with nuclear treaties}---for instance, the Antarctic Treaty \cite{antarctic_treaty_1959}, which forbids the placement of nuclear weapons in Antarctica, could benefit from proofs that a given device is not present on the continent.  
Further examples are \emph{private alibis} (proving that someone was not at a crime scene without revealing their actual location) and \emph{privacy-preserving export-control enforcement} (e.g., proving that a GPU is operated within a U.S.\ data center).

\paragraph{Contributions.}
Our contributions are:
\begin{enumerate}
    \item We formalize zero-knowledge position verification—more precisely, spacetime verification (see Section~\ref{sec:PVforspacetime})—using a simulation-based security definition.

    \item We construct zero-knowledge position verification protocols assuming the existence of post-quantum one-way functions and a certain class of position verification protocols, which is achieved by previous work (see Section~\ref{sec:zklp-intro}).

    \item We introduce a new primitive, \emph{position commitment}, which 
    allows a prover to commit to its geographic location $L$ at time $t$, and to open the 
    commitment at a later time. 
    We give a construction of position commitment 
    and show how it serves as a key building block for zero-knowledge position verification.
\end{enumerate}
%The main contribution of this paper is to introduce, define and construct
%zero-knowledge position verification proofs.
%Towards that end, we define and construct a primitive we call position commitment, which 
%allows a prover to commit to its position $L$ at a specific time $t$ without revealing anything about its position to the verifiers until a later time when it decides to send them an opening of the commitment.  
We note that, beyond providing privacy, our protocols also enable the proving of statements about one's location at an arbitrary point in the \emph{past}. %relative to the time at which the statement is revealed. 
This could potentially allow, for example, someone to prove their past location to a judge.

%For example, can a prover convince the verifiers that it \emph{isn't} in location $P$? 

%This can be trivially done by proving that it is in some other location $Q \neq P$, 
%so it is more interesting to ask whether the prover can prove that is not at $P$ without revealing anything else about its location. In other words, is it possible to have \emph{private} proofs of location? 

%This suggests a new notion of location proof where the verifiers learn the truth of the location statement $S$ without learning anything else; we call this a \emph{zero-knowledge location proof}, in analogy with traditional zero-knowledge proofs in complexity theory and cryptography. 

%In this paper we explore the notion of position verification from the perspective of proofs. As mentioned, position verification protocols gives proofs for statements of the form ``I am at a location $P$''. What other kinds of location statements admit proofs? Furthermore, what are the possible notions of proofs of physical location? We discover that these two questions are intimately related, and show that the scope of provable location statements and locations proofs is much wider than perhaps previously thought. 

\subsection{Position verification for regions of spacetime}\label{sec:PVforspacetime}

Before explaining zero-knowledge position verification, we first define the notion of position verification more carefully. 

We model space as $\R^d$ for some dimension $d$, and time is determined by a universal clock $t \in \mathbb{R}$. We assume that all parties know the time and their positions exactly. We assume that all signals between parties travel at some fixed velocity (e.g., one unit of distance per unit of time) to model the speed of light constraint in physics. We assume that parties can choose for their communication to be either broadcast or sent directionally, i.e., each party can transmit a signal along a ray. Finally, we assume that the behavior of all parties can be described by quantum mechanics -- that is, they can perform quantum computation and send quantum information to each other. (We elaborate more on the modeling in \Cref{sec:model}).

A position verification protocol is a triple $\Pi = (P,V,R)$ where $V = \{V_1,\ldots,V_k\}$ is a set of \emph{verifier} algorithms, $R \subset \R^d \times \R$ is a region of \emph{allowable prover points} in spacetime, and $P = \{P_\alpha\}_{\alpha \in R}$ is a set of \emph{prover} algorithms indexed by $R$. The protocol $\Pi$ is \emph{complete} if for all $\alpha = (L,t) \in R$, the prover $P_\alpha$ is positioned at $L$ at time $t$, and causes $V$ to accept with high probability. The protocol has \emph{position security against a class $\mathscr{C}$ of spoofers} if all sets $\{P_1,\ldots,P_m\}$ of \emph{spoofers} in the class $\mathscr{C}$, none of whom have positions in the set $R$, 
% \footnote{The reader may notice this security definition (which says that \emph{at least one prover} was at the purported location) doesn't exactly coincide with the intuitive idea of position verification (where one would like to say that \emph{the} prover is at the purported location). This is a gap that has existed since the earliest works on position verification~\cite{chandran2009position,buhrman2014position} that has yet to be addressed; we discuss this further in the future directions section. \label{footnote:definition}}
are rejected with high probability. (The formal definition can be found in \Cref{def:pv}). 
%(In this paper we assume positions and times are exact, for simplicity.).
%\tal{above sentence should be removed since we already mentioned in first paragraph}

% \lo{This new paragraph, and the corresponding elaboration in \Cref{sec:model}, are meant to be sufficiently thorough as to dismiss any questions about the ``feasible'' / ``commitable'' set of points.\\ 
% For simplicity, our model will assume that \emph{all verifier sets} discussed in this work will be distributed according to a \emph{canonical}, static\footnote{\lo{I'd like to include this, but let me know if it's over the top: \\The fact that the verifiers are stationary is not used explicitly, only the fact that the positions at each time are publicly known. We do not, therefore, rule out the possibility of accomplishing our protocols with mobile verifiers (e.g. low-earth orbit satellites)}}, and publicly known set of verifier positions $X^*_1,\dots,X^*_k$. We take this set to be distributed sufficiently well, and live for a sufficient time, such that all relevant spacetime points are ``covered'' by verifiers at these positions. This has a geometric interpretation, in terms of the convex hull of $X^*_1,\dots,X^*_k$, which we describe in more detail in \Cref{sec:model}. To justify this assumption, we think it is natural to imagine that the verifiers are a global network of cellular towers, or perhaps geostationary satellites. Either of these would be long-lived, and capable of triangulating any point on Earth from a fixed set of positions.}

The security of a position verification protocol crucially depends on a class $\mathscr{C}$ of spoofers. 
%\tal{Should we also remind that it's known can't do classically for a set of more than one prover?}
It is known
that quantum position verification schemes cannot be secure against spoofers that use more than exponential amounts of quantum entanglement, because of the existence of a general attack~\cite{vaidman2003instantaneous,beigi2011simplified,buhrman2014position}. However, there are secure position verification protocols against spoofers with bounded entanglement~\cite{bluhm2022single} or bounded query complexity in the random oracle model~\cite{unruh2014quantum,liu2022beating}. For concreteness we will focus on spoofers with limited entanglement; however, our results generalize to broader classes of adversaries. %\hnote{shortened from previous text: We will mostly hone in on the bounded-entanglement case, since it is the subject of much examination and many positive results in recent papers, and it is also a very practical model which is easy to work with. To this end, many of our theorems will assume that $\mathscr{C}$ consists of all cheating prover strategies whose preshared entanglement does not exceed a particular bound of $E$ qubits. We do note, however, that many of our protocols are generalizable to other prover classes in other models, and we think this may be a nice avenue for future work to explore. } \lo{I'm on board with that.}

%\hnote{say something about computational, information-theoretic, etc.}
% Here we are deliberately vague about how the verifiers and provers themselves are modeled, i.e., whether they are classical, quantum, have trusted hardware, etc. As we will elaborate on later, our results will hold for a variety of different models.

In this definition, we think of the verifiers as being convinced of the statement ``There is a prover located in spacetime region $R$''. In the literature, position verification protocols have been typically defined for proving a specific position $L^*$ and time $t^*$. We call such protocols \emph{singleton position verification protocols}, because this corresponds to a singleton set $R = \{(L^*,t^*)\}$. Allowing a larger set of positions to be proved is important for our upcoming notion of zero-knowledge position verification.

\paragraph{A canonical position verification protocol.} The reader may find it useful to carry the following example of a quantum position verification protocol in mind. The protocol verifies that a prover is at the midpoint between two verifiers on a line; this is called the \emph{$f$-BB84} protocol~\cite{bluhm2022single}. Let $f:\{0,1\}^n \times \{0,1\}^n \to \{0,1\}$ be a boolean function. Using shared randomness, verifiers $V_1,V_2$ sample uniformly random strings $x,y \in \{0,1\}^n$ and a uniformly random bit $b \in \{0,1\}$. The first verifier $V_1$ creates a qubit in the state $\ket{\psi} = H^{f(x,y)} \ket{b}$ (i.e., a random BB84 state in a basis determined by $f(x,y)$). The verifier $V_1$ sends $x$ and $\ket{\psi}$ towards the midpoint and $V_2$ sends $y$ towards the midpoint (see \Cref{fig:1d-pos-ver} for an illustration). Each verifier expects a single bit $b'$ back from the purported prover and accept if $b = b'$.\footnote{The honest prover at the midpoint who receives $x,y,\ket{\psi}$ at the same time is supposed to compute $f(x,y)$ (which we assume can be done instantaneously), and measure the qubit $\ket{\psi}$ in the appropriate basis to recover the bit $b$.} When $f(x,y)$ is the inner product function this protocol is secure against spoofers that share $O(\log n)$ qubits of entanglement~\cite{bluhm2022single}, and it is conjectured that there exists a polynomial-time computable $f$ such that $f$-BB84 is secure against spoofers with superpolynomial qubits of entanglement.

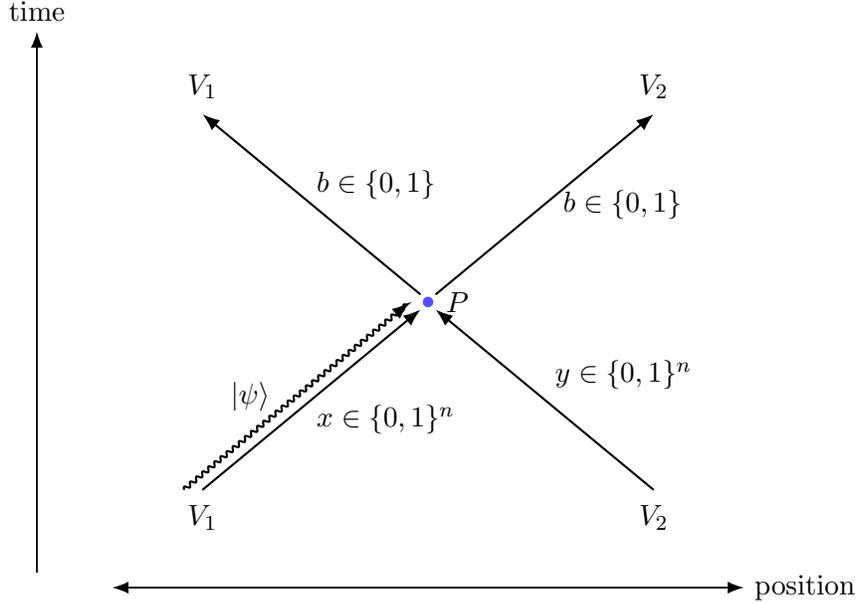
\begin{figure}[tbp!]
\begin{center}
% Requires: \usepackage{tikz}
% Optional (for braces/arrows): \usetikzlibrary{decorations.pathreplacing,arrows.meta}

\begin{tikzpicture}[x=1cm,y=1cm,>=Latex]
  % --- Axes ---
  \draw[-{Latex},line width=0.7pt] (-5.2,-3.6) -- (-5.2,3.6) node[above] {time};
  \draw[<->,line width=0.7pt] (-4.2,-3.8) -- (4.2,-3.8) node[right] {position};

  % --- Key spacetime points ---
  \coordinate (V0bot) at (-3.0,-2.5); % V0 at earlier time
  \coordinate (V1bot) at ( 3.0,-2.5); % V1 at earlier time
  \coordinate (P)     at ( 0.0, 0.0); % Prover
  \coordinate (V0top) at (-3.0, 2.5); % V0 later
  \coordinate (V1top) at ( 3.0, 2.5); % V1 later

  % --- Entities ---
  \node[below=2pt] at (V0bot) {$V_1$};
  \node[below=2pt] at (V1bot) {$V_2$};
  \fill[blue!70] (P) circle (0.07);
  \node[right=3pt] at (P) {$P$};
  \node[above=2pt] at (V0top) {$V_1$};
  \node[above=2pt] at (V1top) {$V_2$};

  % --- Messages ---
  % From V0 to P: x
  \draw[->,thick] (V0bot) -- ($(P)+(-0.1,-0.1)$)
    node[midway,below right = -2pt] {$x \in \{0,1\}^n$};

  % From V0 to P: qubit Q (squiggly, parallel to x arrow)
  \draw[decorate,decoration={snake,segment length=3pt,amplitude=0.8pt},->,thick]
    ($(V0bot)+(-0.25,0)$) -- ($(P)+(-0.25,-0.01)$)
    node[midway,left=6pt] {$\ket{\psi}$};

  % From V1 to P: y
  \draw[->,thick] (V1bot) -- ($(P)+(0.1,-0.1)$)
    node[midway,above right] {$y \in \{0,1\}^n$};

  % From P back to V0, V1: b
  \draw[->,thick] ($(P)+(-0.1,0.1)$) -- (V0top)
    node[midway,above right = -2pt] {$b \in \{0,1\}$};
  \draw[->,thick] ($(P)+(0.1,0.1)$) -- (V1top)
    node[midway,right = 3pt] {$b \in \{0,1\}$};
\end{tikzpicture}

\end{center}
    \caption{A spacetime diagram of the $f$-BB84 protocol~\cite{bluhm2022single}, specialized to 1 dimension position verification. Time goes up, position is horizontal, and signals travel along 45$^{\circ}$ angles.}
    \label{fig:1d-pos-ver}
\end{figure} 

\subsection{Zero-knowledge position verification}\label{sec:zklp-intro}
% First, we introduce the notion of \emph{zero-knowledge location proofs}: these are position verification protocols with the additional property that the verifiers can determine the truth of a statement about the prover's \emph{location} (rather than a mathematical statement) without learning anything else about the prover's location. Examples of such statements include: \hnote{probably should pick and choose the most ``iconic'' versions of these examples :)}
% \begin{itemize}
%     \item Alice is at location $A$ or $B$.
%     \item Alice is not within 10 meters of location $P$.
%     \item Alice and Bob are in the same location.
%     \item The GPS coordinates of Alice's location form a satisfying assignment to a given 3SAT formula $\varphi$.
% \end{itemize}

%We now present the notion of zero-knowledge for position verification. 

 We now give an informal definition of zero-knowledge position verification. Intuitively, the privacy notion we want to capture is that  the ``view'' of the verifiers when interacting with a prover $P$ located in the allowed region $R$ can be simulated independently of its exact position within $R$. Let $\Pi = (P,V,R)$ be a position verification protocol with honest prover $P$, honest verifiers $V = \{V_1,\ldots,V_k\}$, and a set of allowable prover spacetime points $R \subseteq \R^d \times \R$. We say that $\Pi$ is \emph{zero-knowledge} if there exists a polynomial-time \emph{simulator} outputting a quantum state $\rho$ such that for all allowable spacetime points $\alpha \in R$, $\rho$ is computationally indistinguishable from the joint state $\rho_\alpha$ of verifiers $V$ at any given moment during the interaction with prover $P$ located at $\alpha$.  %Just like with zero-knowledge proofs in traditional cryptography, the privacy property views the provers as \emph{honest} and the verifiers as \emph{adversarial} (because they are trying to learn more than they are supposed to).
 
 %Fix two allowable spacetime points $\alpha,\beta \in R$ and let $P_\alpha,P_\beta$ be the corresponding honest provers as guaranteed by the position verification protocol (i.e., $P_\alpha,P_\beta$ would cause the verifiers $V$ to accept with high probability). Consider an execution between the verifiers $V$ and the honest prover $P_\alpha$; let $\rho_\alpha$ denote the state of the verifiers at the conclusion of the protocol.\footnote{In quantum information terms, $\rho_\alpha,\rho_\beta$ are the reduced density matrices of the verifiers.}. Similarly, let $\rho_\beta$ denote the state of the verifiers after an interaction between $V$ and $P_\beta$. The zero-knowledge property is that $\rho_\alpha,\rho_\beta$ are computationally indistinguishable by polynomial-time algorithms.\lo{The statement of this property reads somewhat like witness indistinguishability, but we're calling it zero knowledge -- which, in other contexts classically, is taken to be a strictly stronger property. Should we phrase it with a simulator definition instead?}

Just like there are many variants of zero-knowledge in cryptography (statistical vs.\ computational, semi-honest vs.\ malicious verifier, etc.), there can also be many variants of zero-knowledge position verification. In this paper we study the setting of \emph{semi-honest} verifiers, where we assume that the verifiers honestly follow the protocol $\Pi$, but may try to learn more about the prover's true location by inspecting the prover's messages. We discuss ideas and barriers towards relaxing the semi-honest verifier model in \Cref{sec:towards-malicious}.

%For simplicity w the honest-verifier, computational indistinguishability version; the other versions are analogously defined in \Cref{sec:zklp}. 

%The common feature of these statements is that they are all \emph{$\mathsf{NP}$-verifiable}, in the sense that one can verify in polynomial-time whether a purported prover location $L$ satisfies a given location statement $S$. 

Next we show that zero-knowledge position verification is achievable assuming the existence of one-way functions (OWF) and secure singleton position verification protocols satisfying the following conditions:
\begin{enumerate}
    \item The protocol is one round: the verifiers send a message to the prover, who is supposed to respond immediately. %a message is sent from each verifier towards the prover's purported position in spacetime, and then each verifier expects a response message which should be sent back instantaneously. 
    \item The prover's message is classical, and the verifiers' accept/reject decision is a deterministic function of the prover's response, timing, and the verifier's shared randomness. %ir shared randomness, along with the content and timing of the prover's responses.
%    \item \hnote{This paragraph is confusing. Is $S$ the same as $R$? Shouldn't $S$ (or $R$) be a property of the protocol, so there's no need to quantify over it? I'm not 100\% sure what this condition is supposed to express. What if we delete it?} \lo{Hm. I can definitely see why it's confusing, I didn't explain it very well. This $S$ is eventually the same $S$ as the feasible set in the definition of position commitments, and it differs from $R$ here in the same way that it does there. That is, $S$ is a feasible set of points over which completeness holds, whereas $R$ here is the singleton point we expect the prover to be in. As for why we need to quantify, Definition 2.8 actually applies ``nice'' to a \emph{family} of QPV protocols, where a common set of verifier positions is able to verify every singleton from the set $S$. We could probably find a way to elide this complexity, especially in the intro, but it \emph{is} an important property for our construction, that a fixed set of verifier locations can serve a number of feasible points.}For any finite set $S \subseteq \R^d$,  there exists a set of verifiers $V = \{V_1,\ldots,V_k\}$ in \emph{fixed locations} that can run the singleton position verification protocol for any location $L \in S$ at any time. In other words, with a given set of verifier positions, the protocol can verify any position within some set of spacetime points (typically the convex hull of the verifier positions).
    \item The protocol can securely verify any position contained within the convex hull of the verifiers' spatial positions.
\end{enumerate}
We call a singleton position verification protocol satisfying this \emph{nice}; see \Cref{def:one-shot-mostly-classical} for a formal definition, and an intuition for these criteria.
%secure singleton position verification protocols satisfying some special properties, which we call \emph{one-shot, mostly-classical} (see  \Cref{def:one-shot-mostly-classical} for a formal definition). 
An example of a nice singleton position verification scheme is $f$-BB84. Other examples include the classical-verifier protocol of \cite{liu2022beating}, or the random-oracle model protocol of \cite{unruh2014quantum}.
%whenever singleton position verification is achievable and one-way functions exist. 

We now can state the main result of the paper.

\begin{theorem}[Zero-knowledge position verification, informal]
\label{thm:zklp-intro}
    Suppose there exist post-quantum one-way functions and nice singleton position verification protocols which have position security against spoofers sharing at most $E$ entangled qubits. Then for all finite $R \subseteq \R^d \times \R$, there is an (honest-verifier) zero-knowledge position verification protocol $\Pi_{ZK} = (P,V,R)$ with position security against spoofers sharing $E/2$ entangled qubits.
\end{theorem}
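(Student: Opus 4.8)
The plan is to realize the \emph{position commitment} primitive promised in the introduction from the nice singleton protocol $\Pi_0$ and a post-quantum commitment scheme $\Com$ (which exists from post-quantum one-way functions; we use one that is statistically binding and extractable from the committer), and then take $\Pi_{ZK}$ to be ``position-commit to your spacetime point, then prove in zero-knowledge that it lies in $R$''. Write $R=\{\alpha_1,\dots,\alpha_n\}$ with $\alpha_i=(L_i,t_i)$, and choose the verifier positions so that $L_1,\dots,L_n$ lie inside their convex hull, so that niceness condition~3 lets the verifiers run $\Pi_0$ targeted at any $\alpha_i$; by position security of $\Pi_0$, the honest prover's response to a run targeted at $\alpha_i$ is, except with negligible probability, unpredictable to any coalition absent from $\alpha_i$ at time $t_i$.

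\emph{The protocol.} In a \emph{setup} phase the prover at $\alpha_j$ samples uniform one-time pads $p_1,\dots,p_n$ and sends $\Com(p_1),\dots,\Com(p_n)$. In the \emph{timed} phase the verifiers run one freshly randomized copy of $\Pi_0$ for each $\alpha_i$; the prover, being at $\alpha_j$, computes the correct response $b_j$ and replies to instance $i$ with $r_i=b_j\oplus p_j$ if $i=j$ and with a fresh uniform string if $i\ne j$. The key point is that for \emph{every} $i$ the prover can emit \emph{some} message on exactly the schedule that a response to instance $i$ is due, no matter which point it occupies (for $i\ne j$ this message is garbage, emitted at the pre-computed send times), so both the arrival-time pattern of the timed phase and the marginal of each $r_i$ (uniform) are independent of $j$. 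The verifiers reject unless a message $r_i$ arrives on schedule for every $i$, and then the prover runs a post-quantum honest-verifier zero-knowledge argument of the statement ``$\exists\, j\in[n]$ and an opening of $\Com(p_j)$ to a value $p$ such that $r_j\oplus p$ is an accepting response for instance $j$'' --- a condition the verifiers can evaluate from their shared randomness and the recorded timing (condition~2); the verifiers accept iff the argument accepts. Completeness is immediate (condition~1 keeps the timed phase one-round).

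\emph{Zero-knowledge.} The simulator draws the verifiers' randomness, outputs $\Com(0)$ for each setup commitment and a uniform string for each $r_i$, and runs the argument's simulator. Indistinguishability is the usual two-step hybrid: first replace the real argument by its simulated transcript (honest-verifier zero-knowledge, legitimate because the statement holds on the honest side), then switch $\Com(p_i)$ to $\Com(0)$ (hiding of $\Com$; undetectable because the simulated argument never uses the openings, and $(\Com(p_j),\,b_j\oplus p_j)$ is indistinguishable from a commitment to $0$ paired with a uniform string). Since the honest prover's emission schedule does not depend on $\alpha_j$, the resulting state matches the verifiers' joint state at every instant of a real interaction with any $P_{\alpha_j}$.

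\emph{Soundness --- the main obstacle.} Suppose a coalition with at most $E/2$ shared entangled qubits, none of whose members is at a point of $R$, makes the verifiers accept with non-negligible probability. By soundness of the argument there is an $i^*$ such that $r_{i^*}\oplus p_{i^*}$ is an accepting response for the $i^*$-th run of $\Pi_0$, where $r_{i^*}$ is the coalition's (on-time, by the timing check) message for instance $i^*$ and $p_{i^*}$ is the unique value inside $\Com(p_{i^*})$. The construction is arranged so that this yields an attack on $\Pi_0$ \emph{without racing the clock}: a reduction guesses $i^*$; it runs the coalition, injecting a fresh external run of $\Pi_0$ targeted at $\alpha_{i^*}$ in place of the $i^*$-th internal instance while simulating the other $n-1$ instances itself; it extracts $p_{i^*}$ from the \emph{setup} commitment (which happens before the timed phase, so no timing is disturbed); and, the instant the coalition emits $r_{i^*}$ on time, it transmits $r_{i^*}\oplus p_{i^*}$ on time --- an accepting on-time response to the external run. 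As the coalition is absent from $\alpha_{i^*}$, this contradicts the position security of $\Pi_0$ --- but $\Pi_0$ is only assumed secure against $E$-qubit coalitions, so the entanglement consumed by the reduction (which runs the $E/2$-qubit coalition alongside the challenge injection, the instance simulation, and the extraction) must be charged against that budget. Bounding this overhead so that the combined attacker still fits within $E$ qubits is the technical crux, and is exactly what forces the $E\to E/2$ loss; a union bound over the $n$-way guess of $i^*$ and the negligible error terms (argument soundness, binding of $\Com$, unpredictability of the $\Pi_0$-response) then gives binding of the position commitment, hence position security of $\Pi_{ZK}$.
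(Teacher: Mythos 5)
Your construction is essentially the paper's, stated without the intermediate abstraction. The paper first builds a \emph{position commitment} primitive (run the nice singleton protocol $\Pi_\alpha$ at every point $\alpha\in S$, masking the prover's responses under a committed secret key so that the timed transcript looks identical regardless of where the prover stands) and then, as a separate lemma, runs a post-quantum honest-verifier ZK proof of the $\mathsf{NP}$ statement ``the commitment opens to some $\alpha\in R$.'' You inline those two steps into one protocol, but the decomposition, the hybrid argument for zero-knowledge (replace the ZK transcript by its simulation, then replace the committed masks by commitments to $0$), and the reduction for position security (fix the transcript up to the commitment, then forward the decrypted response into a bona fide run of $\Pi_0$ at the guessed point $\alpha_{i^*}$) are all the same ideas the paper uses, and both accounts absorb an $E\to E/2$ loss for the extra shared state the reduction must carry.

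Two places where you diverge and where the paper's choices are cleaner. First, you mask each instance with its own committed one-time pad $p_i$; the paper commits once to a single secret key $\sk$ and encrypts every response with a semantically secure scheme $\Enc(\sk,\cdot)$. Both work, but the single-key version is what lets the paper phrase binding as ``exactly one $\alpha$ decrypts to an accepting transcript,'' which is the cleanest route to the \emph{uniqueness} clause of its position-binding definition. Second --- and this is the one real gap --- you invoke an \emph{extractable} commitment so the reduction can recover $p_{i^*}$ before the timed phase. Extraction from OWFs in this setting means rewinding, and the commit phase is a physical, timed interaction with $k$ spatially separated verifiers: there is no rewinding it. The paper sidesteps this entirely: it treats the commitment as (statistically) perfectly binding, fixes $\pp^*,c^*$ by an averaging argument, and hands the unique decommitted $\sk$ together with the provers' residual joint state $\omega$ to the reduction as \emph{nonuniform advice}, so the reduction never needs to extract. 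You should do the same; then the ``binding of $\Com$'' term you already union-bound over does the work of extraction. Finally, a small terminological fix: because the cheating coalition is computationally unbounded (bounded only in shared entanglement), the final ZK phase must be a zero-knowledge \emph{proof} (sound against unbounded provers, as in GMW from OWF), not an argument.
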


In other words, any nice secure position verification protocol $\Pi$ 
%(such as $f$-BB84) 
for proving \emph{single} spacetime point can be ``upgraded'' (assuming %secure 
OWF) to a \emph{zero-knowledge} position verification protocol $\Pi_{ZK}$ for
any \emph{finite set} of spacetime positions. % $R$.  
%\hnote{This sentence may need to be updated.} The position security of $\Pi$ is preserved through this upgrading, holding against essentially the same class of spoofers.
% \tal{Should we mention that the verifiers' work is proportional to the number of points?  (not sure, don't feel strongly)}

A formal version of this main result is stated and proven in \Cref{thm:main}.

\begin{remark}
    As a default choice we assume the quantum model of position verification, where the best known protocols are secure against adversaries with unlimited computational power, but with a bounded amount of shared entanglement. However, since the position security of the zero-knowledge location proofs construction is directly inherited from the underlying singleton position verification protocol (plus security of the OWF against the verifiers)\footnote{What our construction actually uses is secret-key encryption and bit commitment, which are equivalent to OWF.},
    %\tal{added this -- correct?}\lo{Yup! good catch}, 
    our results hold in more general settings. For example, classical position verification is trivially possible if you assume there is a single adversarial spoofer; timing the messages of the prover will pinpoint its location. Plugging this simple positioning protocol into \Cref{thm:zklp-intro} yields an interesting example of a \emph{purely classical} zero-knowledge location protocol secure against single-prover adversaries, assuming standard OWF exist. Similarly, our results also yield zero-knowledge location protocols in the classical-verifier, quantum-prover model of~\cite{liu2022beating}, where security is computational (see \Cref{sec:optimization} for an in-depth explanation).
\end{remark}

\subsection{Technical overview}
\label{sec:technical-overview}

The key technical tool in our construction of zero-knowledge position verification is an object we call \emph{position commitment}. This is a protocol that allows a prover to commit to its spacetime position $(L,t)$ without initially revealing anything about it to the verifiers (i.e., the position commitment is \emph{hiding}). Some time later, the prover can decide to \emph{reveal} the pair $(L,t)$ to the verifiers, who can perform a check to ensure that the prover was truly in the location $L$ at time $t$ (i.e., the position commitment is \emph{binding}).\footnote{We call this second property ``position binding'' to mirror the binding property of cryptographic commitments. Note, however, that our notion is significantly stronger than conventional binding: the prover is not only bound to the choice of a single point $(L,t)$ -- it must have \emph{physically occupied} $(L,t)$.}

We define and construct this primitive primarily as a building block for zero-knowledge position proofs, but we point out that position commitments by themselves already enable interesting applications. For example, an individual could continuously commit to their position at regular intervals throughout the day, which would establish a \emph{private} record of their whereabouts. Later, if necessary, they could unseal part of this record in a courtroom, to convince a judge of their alibi.

Our first technical result is that position commitments can be constructed from one-way functions and nice singleton position verification protocols.
%\tal{Is this for all S?  It seems like you assume S (vertically) is within $t_{init}$ and $t_{final}$? Or alternatively, when those times are selected in our protocol, should we define them to be the minimum and the maximum of times in $S$?} \hnote{if we omit description of $S$ until the technical sections then we don't have to worry about this in the intro.}\lo{sounds good.}
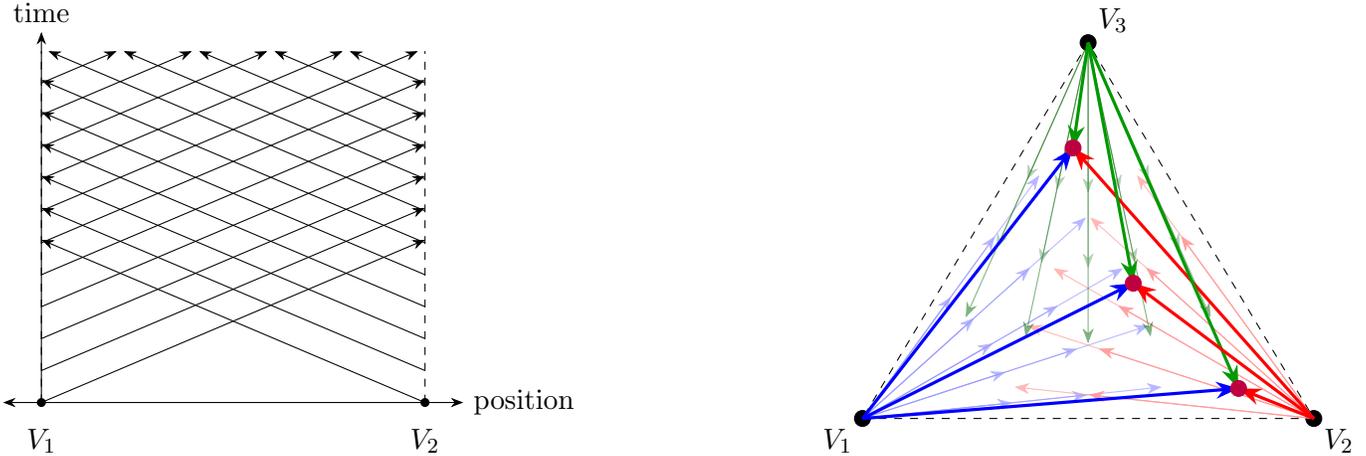
\begin{figure}[ht]
    \centering
    \hspace{-1cm}
    \begin{subfigure}{0.4\textwidth}
        \centering
        \begin{tikzpicture}[scale=0.85, >=Stealth]
          \def\nsteps{10}
          \def\L{6}
          \def\dy{0.5}
          \def\ang{23}
          \pgfmathsetmacro{\ymax}{(\nsteps+1)*\dy}
        
          \draw[->] (0,0) -- (0,{\ymax+0.3}) node[above] {time};
          \draw[<->] (-0.6,0) -- ({\L+0.6},0) node[right] {position};
        
          \draw[dashed] (0,0) -- (0,\ymax);
          \draw[dashed] (\L,0) -- (\L,\ymax);
        
          \fill (0,0) circle (2pt);
          \fill (\L,0) circle (2pt);
          \node[below] at (0,-0.25) {$V_1$};
          \node[below] at (\L,-0.25) {$V_2$};
        
          \foreach \i in {0,...,\nsteps} {
            \pgfmathsetmacro{\y}{\i*\dy}
            \pgfmathsetmacro{\cosang}{cos(\ang)}
            \pgfmathsetmacro{\sinang}{sin(\ang)}
            \pgfmathsetmacro{\sx}{\L/\cosang}
            \pgfmathsetmacro{\sy}{(\ymax-\y)/\sinang}
            \pgfmathsetmacro{\s}{min(\sx,\sy)}
        
            \pgfmathsetmacro{\dx}{\s*\cosang}
            \pgfmathsetmacro{\dyend}{\s*\sinang}
            \pgfmathsetmacro{\xendL}{\dx}
            \pgfmathsetmacro{\yendL}{\y+\dyend}
            \draw[->] (0,\y) -- (\xendL,\yendL);
        
            \pgfmathsetmacro{\xendR}{\L-\dx}
            \pgfmathsetmacro{\yendR}{\y+\dyend}
            \draw[->] (\L,\y) -- (\xendR,\yendR);
          }
        
        \end{tikzpicture}
    \end{subfigure}%
    \hfill
    \begin{subfigure}{0.4\textwidth}
        \centering
        \begin{tikzpicture}[scale=2]
            % --- Define Vertices and Intersection Points ---
            \coordinate (A) at (0,0);
            \coordinate (B) at (3,0);
            \coordinate (C) at (1.5, 2.5);
            \coordinate (P1) at (1.4, 1.8); % Left point, now higher
            \coordinate (P2) at (2.5, 0.2); % Right point, now lower
            \coordinate (P3) at (1.8, 0.9); % Right point, now lower
    
            % --- Label the Vertices ---
            \node[below left] at (A) {$V_1$};
            \node[below right] at (B) {$V_2$};
            \node[above right] at (C) {$V_3$};
            \filldraw[black] (A) circle (1.5pt);
            \filldraw[black] (B) circle (1.5pt);
            \filldraw[black] (C) circle (1.5pt);
    
            % --- Draw the Triangle Boundary ---
            \draw[black, dashed] (A) -- (B) -- (C) -- cycle;
    
            % --- Draw "Miss" Arrows (equally spaced) ---
            \foreach \angle in {6,18,...,54} {
                \draw[-{Stealth[length=2mm]}, blue!40, opacity=0.7] (A) -- ++(\angle:2);
                \draw[-{Stealth[length=2mm]}, blue!40, opacity=0.7] (A) -- ++(\angle:1.5);
                \draw[-{Stealth[length=2mm]}, blue!40, opacity=0.7] (A) -- ++(\angle:1);
            }
            \foreach \angle in {126,138,...,174} {
                \draw[-{Stealth[length=2mm]}, red!40, opacity=0.7] (B) -- ++(\angle:2);
                \draw[-{Stealth[length=2mm]}, red!40, opacity=0.7] (B) -- ++(\angle:1.5);
                \draw[-{Stealth[length=2mm]}, red!40, opacity=0.7] (B) -- ++(\angle:1.5);
            }
            \foreach \angle in {-66,-78,...,-114} {
                \draw[-{Stealth[length=2mm]}, green!40!black, opacity=0.5] (C) -- ++(\angle:2);
                \draw[-{Stealth[length=2mm]}, green!40!black, opacity=0.3] (C) -- ++(\angle:1.5);
                \draw[-{Stealth[length=2mm]}, green!40!black, opacity=0.3] (C) -- ++(\angle:1);
            }
    
            % --- Draw the Arrows for the First Intersection (P1) ---
            \draw[-{Stealth[length=3mm]}, blue, very thick] (A) -- (P1);
            \draw[-{Stealth[length=3mm]}, red, very thick] (B) -- (P1);
            \draw[-{Stealth[length=3mm]}, green!60!black, very thick] (C) -- (P1);
            
            % --- Draw the Arrows for the Second Intersection (P2) ---
            \draw[-{Stealth[length=3mm]}, blue, very thick] (A) -- (P2);
            \draw[-{Stealth[length=3mm]}, red, very thick] (B) -- (P2);
            \draw[-{Stealth[length=3mm]}, green!60!black, very thick] (C) -- (P2);
    
            \draw[-{Stealth[length=3mm]}, blue, very thick] (A) -- (P3);
            \draw[-{Stealth[length=3mm]}, red, very thick] (B) -- (P3);
            \draw[-{Stealth[length=3mm]}, green!60!black, very thick] (C) -- (P3);
    
            % --- Highlight the Intersection Points ---
            \filldraw[purple] (P1) circle (1.5pt);
            \filldraw[purple] (P2) circle (1.5pt);
            \filldraw[purple] (P3) circle (1.5pt);
    
        \end{tikzpicture}
    \end{subfigure}
    \caption{The position commitment described in \Cref{def:pc-construction} pictured for 1D (left), and 2D (right). In $d$ spatial dimensions, intersections between $d+1$ different verifier messages represent spacetime points at which an honest prover might receive a set of position verification challenges, which they will respond to in an encrypted way. A few example points have been bolded on the right for clarity.}
    \label{fig:2d-pc}
\end{figure}
\begin{theorem}[informal version of \Cref{thm:construction-secure}]
\label{thm:pc-intro}
    Suppose there exist post-quantum one-way functions and nice singleton position verification protocols which have position security against spoofers sharing at most $E$ entangled qubits. Then for every finite spacetime region $S$, there exists a position commitment protocol with completeness for all points in $S$, computational hiding against semi-honest verifiers, and statistical position binding against spoofers sharing $E/2$ qubits of entanglement.
    %Then for all finite sets $S \subseteq \R^d \times \R$, there exists a position commitment protocol \hnote{removed: $\mathcal{C}$, because it's not refereed to again in the theorem statement.} with committable set $S$, honest-verifier computational hiding, and statistical position binding against spoofers sharing $O(E)$ entangled qubits.
\end{theorem}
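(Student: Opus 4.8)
The plan is to build the position commitment scheme directly out of a nice singleton position verification protocol $\Pi_{\single}$ together with a secret-key encryption scheme (equivalently, a post-quantum one-way function). The geometric idea is the one depicted in \Cref{fig:2d-pc}: to commit to a spacetime point, the verifiers do not aim at a single point, but instead \emph{broadcast} (or sweep out, via many directional rays) their position verification challenges from all of $V_1,\dots,V_{d+1}$ in many directions simultaneously, at a sequence of closely-spaced times. Any spacetime point $(L,t)$ in the target region $S$ lies on the intersection of $d+1$ such rays, one from each verifier, arriving simultaneously; a prover physically at $(L,t)$ therefore receives a complete bundle of $d+1$ singleton-PV challenges at that moment. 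The prover answers the challenge \emph{honestly} as in $\Pi_{\single}$, but rather than sending the cleartext response back, it encrypts the response under a freshly sampled secret key $k$ and sends the ciphertext toward the verifiers. This ciphertext (together with the recorded timing of all the responses the verifiers see) is the commitment: by semantic security of the encryption it reveals nothing about which point $(L,t)$ was used, giving computational hiding against the honest-but-curious verifiers. To \emph{reveal}, the prover later announces $(L,t)$ and the key $k$; the verifiers decrypt the response associated to the $d+1$ rays through $(L,t)$, check that the timing is consistent with a prover at $(L,t)$, and run the $\Pi_{\single}$ acceptance predicate. Completeness for all points of $S$ follows because the honest prover at any $(L,t)\in S$ receives exactly the bundle it needs and the $\Pi_{\single}$ acceptance predicate is a deterministic function of response, timing, and verifier randomness (property 2 of niceness), and $(L,t)$ is in the convex hull of the verifiers so $\Pi_{\single}$ is defined there (property 3).

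For \textbf{computational hiding}, I would define the simulator to run the honest verifiers' message-generation, record the (publicly visible) pattern and timing of all ciphertexts that would come back from an honest prover, and output that together with encryptions of $0$ in place of the real encrypted responses. A hybrid argument swapping the real encrypted $\Pi_{\single}$-responses for encryptions of $0$ — one ciphertext at a time — reduces distinguishing the real verifier view at point $\alpha$ from the simulated view to breaking semantic security of the encryption scheme against the verifiers. Here it is important that the \emph{timing pattern} of responses is the same for every $\alpha\in S$: this is arranged by having the prover always send back a response for every ray-bundle it could possibly be sitting on within $S$ (padding with encrypted dummy values when it is not actually at that intersection), so that the envelope of observable signal arrivals is independent of $\alpha$. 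Since $S$ is finite this is only polynomially many ciphertexts.

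For \textbf{statistical position binding against spoofers sharing $E/2$ qubits}, the argument is a reduction to the position security of $\Pi_{\single}$. Suppose a coalition of spoofers, none located at any point of $S$, shares at most $E/2$ qubits and nonetheless produces a commitment that it can later open to some $(L,t)\in S$ with non-negligible probability. From such a coalition I build a spoofer for $\Pi_{\single}$ at the single point $(L,t)$: the new spoofer internally simulates the position-commitment verifiers' broadcast of challenges, embeds the real $\Pi_{\single}$ challenge into the ray-bundle through $(L,t)$, runs the coalition, takes the claimed opening key $k$, decrypts the relevant ciphertext, and forwards the decrypted response to the real $\Pi_{\single}$ verifiers with the correct timing. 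Because the coalition successfully opens to $(L,t)$, this response passes the $\Pi_{\single}$ predicate — contradicting position security of $\Pi_{\single}$, since none of the coalition members sat at $(L,t)$. The entanglement budget halves because the simulation of the extra $d$ challenge-broadcasting "virtual verifiers" inside the reduction, and the bookkeeping needed to route signals to the right point, costs at most another $E/2$ qubits of shared state; I would make this counting precise. Binding is \emph{statistical} (not merely computational) because this reduction does not rely on the hardness of the one-way function at all — only on the information-theoretic/entanglement-bounded position security of $\Pi_{\single}$ and the \emph{perfect} correctness of decryption, so there is no place for a computational assumption to enter.

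The step I expect to be the main obstacle is making the geometric/timing part of the binding reduction fully rigorous: one must argue that a spoofer coalition with \emph{no} member in $S$ genuinely cannot have assembled, at one place and instant, the $d+1$-ray challenge bundle associated with $(L,t)$ faster than light would allow — i.e.\ that successfully opening to $(L,t)$ really does force the existence of a well-defined $\Pi_{\single}$ transcript at $(L,t)$ — and to do this while carefully tracking the entanglement used to shuttle quantum challenge states among the $k$ simulated verifiers. This is exactly where the "$E$ becomes $E/2$" loss comes from, and getting the constant and the reduction's locality right (so that it respects the speed-of-light constraint and doesn't itself cheat) is the delicate part; the cryptographic hiding direction, by contrast, is a standard hybrid over polynomially many ciphertexts.
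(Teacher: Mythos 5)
Your overall plan matches the paper at a high level, but your sketch of the binding argument has a genuine gap, and it traces back to a missing ingredient in the construction itself.

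In your scheme the prover samples a fresh secret key $k$, encrypts its responses, and only announces $k$ at reveal time. Nothing binds the prover to $k$ during the $\Commit$ phase. A (computationally unbounded) cheating coalition can therefore send arbitrary ciphertexts during $\Commit$, wait to see what it would like each one to decrypt to, and then at reveal time search for a key $k'$ that makes some favorable bundle decrypt to an accepting $\Pi_{\single}$ transcript at a point it never occupied. Secret-key encryption gives no guarantee that a ciphertext has a unique plausible plaintext across all keys, so this attack is not ruled out, and binding does not hold for your construction as stated. The paper's construction avoids this by having the prover commit to $\sk$ at the very beginning of $\Commit$ using a classical, \emph{statistically binding} commitment $c=\Com(\pp,\sk;r)$, with $\pp$ broadcast by the verifiers; at reveal time the prover must open this commitment, pinning down $\sk$ before any challenges are seen.

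This missing commitment also makes your binding reduction circular. Your reduction ``takes the claimed opening key $k$, decrypts the relevant ciphertext, and forwards the decrypted response to the real $\Pi_{\single}$ verifiers with the correct timing'' --- but $k$ is announced only during $\Reveal$, which happens after the real $\Pi_{\single}$ verifiers already expect the (decrypted) response in real time. The reduction has nothing to decrypt with at the moment it must act. The paper fixes this: because $\Com$ is statistically binding, once the transcript prefix $(\pp^*,c^*)$ is fixed, $\sk$ is information-theoretically determined; the reduction conditions on a good $(\pp^*,c^*)$, hardcodes the corresponding $\sk$ together with the provers' joint internal state $\omega$ (taken just after $c^*$ is sent) as nonuniform advice, and decrypts on the fly with $\sk$. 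Relatedly, the $E\mapsto E/2$ loss in the paper comes from passing $\omega$ (which may itself contain up to $E/2$ ebits) as advice to the new spoofing coalition, not from ``simulating $d$ virtual verifiers'' or routing signals, which costs no additional shared entanglement.

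Your completeness and hiding arguments are essentially the same as the paper's (encrypt-then-verify, dummy ciphertexts to equalize the timing pattern across all $\alpha\in S$, hybrid over ciphertexts), and those parts are fine --- hiding in the paper additionally relies on the computational hiding of the commitment $c$ to $\sk$, which you would need to add once you patch the construction. The binding property can indeed be made statistical, as you intuited, but precisely because $\Com$ is taken to be statistically binding; it is not ``free.''
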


\paragraph{Constructing position commitments.} We give a high level overview of our construction of position commitments. We'll outline what the honest verifiers $V$ and honest prover $P$ do. 

% Let $S \subseteq \R^d \times \R$ denote a finite set of spacetime points to which a prover is allowed to commit to. We also assume that the verifiers are in fixed locations in $\R^d$ whose convex hull contains the spatial points of $S$ (i.e., its projection to $\R^d$); this is a necessary condition for position verification.

Suppose prover $P$ is planning to be in the spacetime location $(L^*,t^*)\in S$ where $t^*$ is in the future. Without announcing $(L^*,t^*)$ to the verifiers, $P$ initiates a \emph{session} that takes place between times $t_{\mathrm{init}} < t_{\min}(S)< t_{\max}(S) < t_{\mathrm{final}}$ 
wherein it interacts with $V$ to commit to its hidden position (here, $t_{\min}(S)$ and $t_{\max}(S)$ are the earliest and latest times of points in $S$, respectively). In this session, the prover and the verifiers engage in a continuous stream of singleton position verification protocols, running enough instances to cover the entire space of committable points where $P$ may be located. However, in order not to expose its location, $P$ encrypts all of its messages under a secret key, which the verifiers will only learn when the prover decides it wants to reveal its commitment. The protocol is constructed such that no matter where $P$ is, it can convincingly pretend to be at all other points in $S$ simultaneously, by sending dummy messages which meet those timing constraints. In order to prevent the prover from changing its mind about the contents of the encrypted messages by finding a different decryption key, the prover commits to the encryption secret key at the beginning of the protocol. The protocol runs in time $\poly(\secparam,|S|)$. %As for the efficiency of our protocol: everything runs in $\poly(\secparam)$ units of time, but there is a linear dependence on the number of spacetime points in the committable set $S$. Clearly $S$ must be finite, or our protocol will not terminate. We imagine that this will typically involve discretizing the space to some suitable level of precision.
We illustrate the position commitment protocol for dimensions $d = 1$ and $d = 2$ in \Cref{fig:2d-pc}.

\paragraph{Proving security of position commitments.}
It is not difficult to see that \emph{hiding} security of the above construction follows from two facts: 1. all prover messages are indistinguishable assuming the security of secret-key encryption, and 2. the set of message timings received by the verifiers is kept static, regardless of the prover's true position.

Proving the \emph{position binding} of this construction is, however, considerably less straightforward. The proof, a reduction to position verification soundness and classical commitment binding, is unlike typical reductions in the cryptography literature. Namely, there is a challenge in manipulating the \emph{real-time} inputs and outputs of the adversary -- it is unclear what ``fast-forwarding'' and ``rewinding'' might mean in a context where the adversary is distributed across concrete points in space and time. In our solution to this, the reduction is required to take \emph{nonuniform quantum advice}.

\paragraph{From position commitments to zero-knowledge position verification.} 
%While position commitments enable proofs of \emph{singleton} position in spacetime, this is already an interesting primitive potentially enabling new %real-life 
We now describe how position commitments can be used as a building block to construct zero-knowledge position proofs, which proves \Cref{thm:zklp-intro}. 
%However, we conceived of the idea of positions to be a building block in a more versatile primitive, which is the zero-knowledge position proof. 
% \begin{theorem}[informal version of \Cref{thm:zklp-construction}]
%     Suppose there exists a position commitment protocol 
%     %with committable set $S\subset\R^d\times\R$, 
%     with honest-verifier computational hiding and statistical position binding against spoofers sharing at most $E$ qubits of entanglement. Then for all finite $R\subseteq \R^d \subset \R$, there exists an honest-verifier zero-knowledge position verification protocol $\Pi=(P,V,R)$ with position security against spoofers sharing at most $E/2$ qubits of entanglement. % the same class $\mathscr{C}$ of spoofers.
% \end{theorem}
The method we use for this extension is very similar to how, in the classical case, commitments are used to achieve zero-knowledge for general $\sf NP$ predicates. Once a prover has interacted with some verifiers to create the position commitment (which is, in our construction from nice position verification protocols, a classical string), it engages in a zero-knowledge proof protocol for some statement about the hidden position. In particular, the prover can express an $\sf NP$-verifiable statement of the fact that it possesses an opening to the commitment such that the revealed point belongs to $R$, and employ the standard zero-knowledge proof system for $\sf NP$ languages. The soundness of this proof, combined with statistical position binding of our position commitments, gives us position security against dishonest provers.

\paragraph{Optimizing the position commitment protocol.}
We construct position commitments for a fairly general class of position verification protocols (all so-called \emph{nice} protocols), which encompasses most of what has been explored in the literature. However, this generality comes with a cost to the efficiency of our protocols. 
%which we believe could hamper the real-world practicality of our results. Luckily, we present an alternative. 
In \Cref{sec:optimization}, we narrow our assumption further to protocols whose messages are fully classical, and whose verifier challenges can be generated independently per verifier, and show how this can be used to greatly optimize the computational requirements of our protocol.

\subsection{A brief overview of quantum position verification}

The study of quantum position verification (QPV) has blossomed in recent years. A number of protocols have been proposed~\cite{kent2011quantum,buhrman2014position,unruh2014quantum,bluhm2022single,liu2022beating,escolafarras2025quantum}, some with provable security guarantees, and others' that are conjectured. Fascinatingly, unconditionally-secure position verification is not possible even in the quantum setting; it was shown by~\cite{buhrman2014position} that the so-called ``instantaneous nonlocal quantum computation'' protocol of Vaidman~\cite{vaidman2003instantaneous} (subsequently improved by~\cite{beigi2011simplified}) can be used to break any QPV protocol, provided that the attackers can utilize exponential amounts of quantum entanglement. The central question in this area is to prove the security of a QPV protocol in the presence of adversaries that can only share polynomial amounts of entanglement. This challenge, in turn, is connected with deep questions in a number of seemingly-unrelated areas, ranging from \emph{classical} lower bounds for information-theoretic cryptography~\cite{allerstorfer2024relating,asadi2025conditional,asadi2025ranklowerboundsnonlocal,girish2025comparingclassicalquantumconditional}, the AdS/CFT correspondence in quantum gravity~\cite{may2019quantum}, and conjectures about Banach spaces~\cite{junge2022geometry}.

 A recent interesting result by \cite{jpm2024} showed that classical-verifier position verification protocols such as that of \cite{liu2022beating} are, in general, equivalent to protocols for \emph{certifiable randomness}, hinting at deep connections between QPV and other areas of quantum cryptography.

\subsection{Summary and future directions}

% \greg{this pararagraph is very nice but maybe even better if it's moved to intro}
% An entity's position in space and time is one of its most precious physical attributes, so it is desirable to keep that information as private as possible. On the other hand, sometimes it is important for an entity to prove something about its location at some moment in time; examples include proving to a judge that one was not near the crime scene, or proving that a nation-state's armaments stayed within its borders. 

Our concept of zero-knowledge position verification gives a solution towards achieving both privacy and provability for statements about one's physical location. We show that it can be achieved using one-way functions and (non-private) quantum position verification protocols. 
We view our contribution as taking the first steps in exploring the natural combination of privacy and proofs of position. We identify some interesting directions for future work.

%Since position verification cannot be performed in a purely classical setting (without severe hardware assumptions), the ability to process quantum information is crucial.

% \tal{ I wrote an initial draft that needs to be edited, but it depends on what we write in sections 5,6 -- it seems to me everything I could think to mention is already mentioned in those section and in the modeling discussion. I mention 3 things: the malicious verifier case, the stronger definition possibilities for future, other modeling assumptions. Is there something else?} \hnote{I expanded on the summary/discussion}

% While we believe the questions we are asking here are natural, formalizing them uncovers many interesting subtleties, both on the technical side and the conceptual one.  For example, the interplay between zero-knowledge and physical statements 

% ... \tal{ fill in... when you are supposed to send a message to a physical location, and no one is there, it's indistinguishable whether you sent it or not -- quote about when a tree falls in the forest? :)}.  This underlies the inherent problem with malicious verifiers, and we see several potential avenues to address it, as we discuss in \Cref{sec:towards-malicious}.

\paragraph{Towards malicious-verifier security (full discussion in \Cref{sec:towards-malicious}).} We define and construct position commitments and zero-knowledge position verification protocols where the privacy condition holds with respect to \emph{semi-honest} verifiers. We view this model as well-motivated by itself; for example, one can imagine that the verifiers are cell towers whose goal is to provide consistent position verification coverage to many customers over a wide area. Although the cell phone company may be interested in tracking their customers' whereabouts by examining communication logs, it is not sufficiently motivated to deviate from the protocol in a way that can be noticed by customers. 

Can we obtain a stronger form of zero-knowledge position verification against malicious verifiers who deviate from the protocol?  Unfortunately, we cannot hope to achieve zero-knowledge against arbitrary cheating verifiers; in \Cref{sec:towards-malicious} we present a generic attack where the verifiers can determine whether a prover is in a location $L$ by actively sending directional messages towards $L$, but nowhere else. Whether or not the prover responds will leak whether the prover is in location $L$. 
We believe this is an interesting property, which is inherent to any protocol where the honest prover would respond to directional messages (either because the honest verifier behavior requires sending directional messages, or because it cannot distinguish between messages that are directional vs broadcast). 
On the other hand, we conjecture that in other settings, a more standard ``HVZK-to-ZK'' upgrading techniques should apply, and security against malicious verifiers can be added.  That is: 
\begin{conjecture}[informal]
    Suppose we have a honest-verifier position verification protocol which requires the verifier to send only broadcast messages, and further suppose that the honest prover can tell whether a message it received was directional rather than broadcast.  Then the protocol can be upgraded to have security against malicious verifiers. 
 \end{conjecture} 
We note that position verification protocols that require only broadcast messages exist (e.g., ones where the communication is classical). Thus, we believe that formalizing our attack and proving the above conjecture is a promising avenue for future work.

\paragraph{Towards stronger position verification guarantees.} 
% As mentioned in the footnote on Page \pageref{footnote:definition}, 
When we carefully consider the soundness of position verification, we find that there is a subtle gap between the \emph{intuitive} notion of position security and the \emph{formal} definition presented here and in previous papers on position verification, starting from the earliest ones~\cite{chandran2009position,buhrman2014position}. In the intuitive notion, a position verification protocol allows someone to prove the statement ``I was at location $X$''. However, in the formal definition, the protocol only guarantees that, of a coalition of colluding provers, \emph{at least one} of the provers (rather than \emph{the} prover) was at location $X$. 
%It does not preclude the possibility that the coalition is employing a nontrivial distributed strategy. 
It may be desirable to 
%rule out such cheating behavior and 
have a stronger guarantee that {all} of the ``real'' computation done to pass the protocol was done at the claimed location.

How do we define who the ``real" prover or computation is (who is ``I")?  
One potential direction is to leverage unclonability.
%, as was the central idea behind the original constructions of QPV. 
Imagine a position verification protocol with the following structure:
\begin{enumerate}
    \item At the beginning of time, the prover receives from the verifiers a quantum state $\ket{\psi}$, which is unclonable and has some cryptographic function. Let's say, for example, that $\ket{\psi}$ is a quantum signing token (like that of \cite{BS23}).
    \item During the position verification protocol, the prover is required to sign its responses with $\ket{\psi}$.
    \item By an appropriate security reduction, we would then be able to claim, ``the PV protocol passes if and only if $\ket{\psi}$ was in location $X$''.
\end{enumerate}
We note, however, that formalizing this security notion, with the goal of fully capturing our intuitive notion, still seems difficult. What does it mean for $\ket{\psi}$ to be \emph{in} location $X$ -- what if it is distributed across a large error-correcting code? Questions like these are explored in a 2016 paper \cite{Hayden_2016}, but it is not immediately clear how to apply their results to this setting. An additional question is, under what assumptions can we say that $\ket{\psi}$ truly refers to a \emph{unique} object in the universe, which has a canonical position that we can refer to? Is some version of this statement implied by unclonability alone?
% Can such a stronger guarantee be formalized? This would come closer to realizing the intuitive security notion. 

%In (standard, non-private) position verification, the security guarantee is that a coalition of colluding spoofers, none of which are at the claimed location, cannot convince the verifiers to accept with high probability~\cite{chandran2009position,buhrman2014position}. This ``position security'' guarantee is inherited by our definitions of position commitments and zero-knowledge position verification. 

%However, this security definition technically allows for the possibility that a group of colluding spoofers includes a spoofer at the claimed location, yet the spoofers are employing a nontrivial distributed strategy to convince the verifiers there is only one ``real'' spoofer. It may be desirable to rule out such cheating behavior and have a stronger guarantee that {all} of the ``real'' computation done to pass the protocol was done at the claimed location. Can such a stronger guarantee be formalized? 

%Another issue arises when honest behavior 
%(by either prover or verifiers) 
% is tied to where the party is in spacetime, but we want to consider adversarial collusions of more than one physical party.  
% Should we be satisfied with a solution where at least one of the colluding parties is in the right location?  Previous definitions of position verification, and our generalization of them, allow for this. But in some cases, this may not be satisfactory, and a stronger definition is required. \tal{elaborate? alibi?}

\paragraph{Simplifying our nonuniform security reduction.} In \Cref{sec:position-commitments} we give a construction of a new position-based cryptographic primitive, which we call position commitments; we then prove the soundness of this construction by a reduction to secure position verification and classical commitment binding. The reduction, which is found in \Cref{thm:construction-secure}, operates in an atypical way and requires nonuniform quantum advice. The reasons for this are, in brief, as follows.

Our reduction has white-box access to a distributed cheating adversary for our position commitment scheme, and must use this to win in the standard position verification experiment (or break the statistical binding of Naor commitments). In the position commitment experiment, the cheating adversary receives a set of PV challenges, and outputs a set of \emph{encrypted} PV responses, which will then be decrypted at a later time whenever the adversary chooses to reveal the secret key. A natural reduction strategy would be to feed this adversary the PV challenges, and quickly fast-forward the adversary until it 1. outputs the encrypted responses, and then later 2. outputs the decryption key -- after this, the reduction can simply output the plaintext PV responses and win in the PV experiment.

However, we run into a problem: the distributed adversary might need to exchange messages between its various components, \emph{which are spatially distant}. Therefore, it is physically impossible for our reduction to fast-forward a general adversary by any significant amount of time, as the reduction needs to respond instantaneously to its challenges and cannot wait for the exchanging of messages at the speed of light. In our proof of \Cref{thm:construction-secure} by endowing the reduction with a nonuniform quantum advice state, containing the secret decryption key that the adversary will eventually output. Can this be simplified, and the nonuniformity removed?

\paragraph{Towards more realistic models.} Finally, 
the model we used in this paper (and which is consistent with prior work) is highly idealized, including precise, synchronized clocks, and instantaneous computation. Some of this seems inherent not only to the main ideas  underlying existing protocols (based on timing of responses) but also to any real-world quantum information processing. Finding  a way to relax these assumptions could help bring these protocols a step closer to practical realizability.

%\tal{add also applying our ideas for other models of computation eg hardware based etc}

\section{Preliminaries}\label{sec:preliminaries}
\subsection{Notation}
We take $\secparam$ to be the security parameter. We write PPT and QPT to denote probabilistic polynomial-time and quantum polynomial-time, respectively. We write $\negl(\secparam)$ to denote a negligible function in $\secparam$. For two algorithms $P$ and $V$ (implicitly, prover and verifier), we write $(P\interact V)\in\{\mathrm{accept},\mathrm{reject}\}$ to denote the interaction between $P$ and $V$, where the accept/reject decision is output by $V$ at the conclusion of the interaction. When the parties share a common input $x$, the interaction will be denoted $(P\interact V)(x)$. $V$'s view of a (classical-message) interaction with $P$, denoted $\View_V(P\interact V)$, is defined as the full transcript of messages exchanged between $P$ and $V$, as well as all random coins flipped by $V$, during the interaction. For quantum-message interactions between $P$ and $V$, we define $V$'s view at a given time $\tau$, denoted $\View^\tau_V(P\interact V)$, to be the quantum mixed state corresponding to $V$'s internal state at time $\tau$. 
% \lo{As explained below, $H^*$ denotes a canonical spatial region -- the convex hull of the canonical verifier spatial positions $X^*_1,\dots,X^*_k$.}

\subsection{The model}
\label{sec:model}

We now describe the model of space, time, and interaction used in the paper.

\paragraph{Spacetime.} We model space as $\R^d$ for some dimension $d$, and time as $\R$. We assume that all parties have access to a synchronized clock that reports the current time $t \in \R$. A \emph{point in spacetime} is a pair $(L,t) \in \R^d\times\R$, combining a spatial point with a time. For the sake of simplicity, we assume that all parties are located at points $(L,t) \in \mathbb{Q}^d\times\mathbb{Q}$; this allows finite representations of their positions.
%\tal{removed "on a computer." cause it seemed a little confusing. } 
%We assume that all parties know their own positions exactly.

Each algorithm $A$ is identified with a unique spatial point at all times $t$, called its \emph{position}. We use the following notations interchangeably:
\begin{enumerate}
    \item $A$ occupies/is positioned at the (spacetime) point $(L,t)$.
    \item $A$'s position at time $t$ is $L$.
    \item $L=\pos_t(A)$
\end{enumerate}
For a spacetime region $R\subseteq\R^d\times\R$, we may write $\pos(A)\in R$ to mean that there exists a time $t$ such that $(\pos_t(A),t)\in R$, and $\pos(A)\not\in R$ for the negation of this.

% \lo{New paragraph which will eliminate the need for a ``feasible set'' $S$.\\
% Let $X^*_1,\dots,X^*_k\in\R^d$ be some canonical set of spatial points, according to which all of our protocols' verifier algorithms are positioned. As discussed in many prior works on position verification, a necessary condition for secure position verification is that the target point be located within the \emph{convex hull} of the verifiers' positions. It is also the case, in all prior position verification protocols we have come across, that this is a sufficient condition -- we've included this in the criteria for a protocol to be ``nice'' (\cref{def:one-shot-mostly-classical}). Letting $\conv_k:(\R^d)^{\times k}\to\R^{k-1}$ denote taking the convex hull of a set of $k$ spatial points, we will then set these canonical verifier positions $X^*_1,\dots,X^*_k$ such that $H^*=\conv(X^*_1,\dots,X^*_k)$ contains all points relevant to the application. We also assume that verifiers assume these positions for an arbitrary and sufficiently long duration, so that $H^*\times\R$ represents exactly those spacetime points verifiable by a nice position verification protocol.}

\paragraph{Signals.} We assume there are two types of signals: \emph{directional} and \emph{broadcast}. A directional signal is emitted from some location and travels along some ray in $\R^d$, and only spatial points located on the ray can receive the signal (e.g., a party can transmit a ``laser beam''). A broadcast signal is emitted from some location and eventually can be received by all points in space (e.g., a party can send a ``radio broadcast''). In general, a signal is a quantum state, although we often distinguish between the classical and quantum parts of a given signal (for example, a party may send a string $x$ and a qubit $\ket{\psi}$). Only classical strings will be broadcast.

By convention, we assume that signals travel at a fixed speed of one unit of space per unit time (i.e., the ``speed of light''). For example, if a directional signal is sent from the origin in the direction of some unit vector $\vec{m} \in \R^d$ at time $0$, the signal arrives at point $t \vec{m} \in \R^d$ at every subsequent time $t > 0$. For broadcast signals, the previous statement is true for \emph{all} unit vectors $\vec{m}\in\R^d$.
% \textbf{1.} $A$ occupies the point $(L,t)$, \textbf{2.} $A$'s position at time $t$ is location $L$, \textbf{3.} $\pos_t(A)=L$.

%We model space and time as discrete; each position in space is a point $p \in \R^d$ for some dimension $d$, and each moment in time is marked by an integer $t \in \R$. This way, space and time have finite representations on a computer (in contrast to if we modeled space and time as continuous spaces), and this also allows us to elide over issues of imprecision for the sake of simplicity. We assume that all parties agree on this model of space and time, and that every party has a synchronized clock that reports the current time $t$. We assume that signals travel at a speed 

%\hnote{old text here: } such that each point in spacetime is a tuple $(L,t)\in\mathbb{Z}\times\mathbb{Z}^d$. $t$ is said to be the time, and $L$ the location or position. We imagine that all parties settle on a standard unit of spacetime according to the finest possible resolution -- given the current best clock speeds, sensor accuracies, etc. Time is taken to be universal across all parties, as if their clocks are always perfectly synchronized. Because we assume that all signals travel at a fixed speed, we will choose arbitrary units of time and space such that measurements are interchangeable -- i.e. one ``second'' of time and one ``light-second'' of distance will be considered to be the same quantity, when convenient.

\paragraph{Computation and Interaction.} 
%All parties are equipped with a quantum computer that can receive and transmit quantum information. 
We model each party as some quantum algorithm $A$ that can receive and transmit classical and quantum information. Its classical inputs are: a security parameter $1^\secparam$ represented in unary, its current location $L$ (expressed in binary), and the time $t$ (expressed in binary). Its quantum inputs are a register containing its internal memory state, and a register representing the (quantum) messages that are received at precisely time $t$. If there are no messages received, then the message register is given some encoding of the $\bot$ symbol. The output of the algorithm $A$ is a pair $(\mathtt{mode},\vec{m})$ and two quantum registers where
\begin{enumerate}
    \item $\mathtt{mode} \in \{ \mathtt{directional}, \mathtt{broadcast}, \bot \}$ indicates whether the message is supposed to be directional, broadcast, or not sent at all,
    \item The vector $\vec{m} \in \R^d$ specifies the direction if needed, and
    \item The first quantum register is the party's internal memory state (to be passed to the same party in the next time step), and the second quantum register contains the message to be sent.
\end{enumerate} 
If $\mathtt{mode} = \mathtt{broadcast}$, then the state $\rho$ is measured to obtain a classical string $x$ and broadcast everywhere. 

We assume that algorithms have two interfaces with the global clock: a discrete ``tick'' update on which they can run their loops, and a ``real-time'' interface with which they can, for example, send out a message at any $t\in\R$. Since we assume that each party has a unique position $L \in \R^d$ at each time $t \in \R $, we model its behavior at each clock tick $t$ as computing the output $A(1^\secparam,L,t,\reg{E},\reg{A})$ where $\reg{E},\reg{A}$ denote its internal memory and message registers, respectively; the message register is some well-defined concatenation of all the signals received at spacetime point $(L,t)$. We will henceforth elide these inputs (save for the security parameter) as they are implicit from context.

By default, we assume that the computational complexity of the quantum algorithm $A$ is polynomial in the security parameter $\secparam$. For simplicity, we separate the time it takes to compute $A(1^\secparam,L,t,\cdots)$ from the ticking of the global clock $t$. That is, we assume $A$ computes its output in one time step.

%In this paper we describe a number of communication protocols, which involve several algorithms running and communicating with each other in real-time. As is standard in the position verification literature, we do not account for the time of computation and instead assume that each of these parties is equipped with the power to instantaneously execute any efficient (quantum) computation. That is, for a language $L\in{\sf BQP}$, we assume that each party has the computational power to decide $L$ instantaneously. We model each party as occupying a single point in space at each timestep. All communication is taken to occur at a constant speed which is the same for all parties (e.g. the speed of light). Furthermore, there are two communication modes: direct message, and broadcast. Direct message is ``laser beam''-style, where a party chooses a vector $\Vec{m}\in\R^d$ and sends a signal along the corresponding spacetime ray, and only another party located along this ray can receive the message. Broadcast means that the information propagates in all directions at the speed of light -- we only allow this for classical messages, since broadcasting a quantum state is not meaningfully realizable.

In this paper, there are two types of parties: verifiers and provers. We focus on honest verifiers whose locations are always known to all parties. We discuss possible approaches to malicious verifiers in  \Cref{sec:towards-malicious}.

These modeling decisions were chosen to be consistent with the prior literature on position verification, while being as simple as possible in order to illustrate the essential ideas behind zero knowledge position verification. Extending the model to handle inaccuracies, errors, unsynchronized clocks, etc. are fascinating directions for future work.

\subsection{Cryptographic Notions}

%In our construction of zero-knowledge position verification, we will need some tools from classical cryptography. We remind ourselves of the definitions for those:
We recall the definitions of some basic cryptographic primitives that will be used in the paper.

\begin{definition}[Computational Indistinguishability]
    The families of random variables $\{X_\lambda\}_{\lambda\in\mathbb{N}}$ and $\{Y_\lambda\}_{\lambda\in\mathbb{N}}$ are \textbf{computationally indistinguishable} if for every QPT distinguisher $D$, there exists a negligible function $\negl$ such that for all $\secparam\in\N$:
    \[
\left|\Pr\left[D(1^\lambda,X_\lambda)=1\right]-\Pr\left[D(1^\lambda,Y_\lambda)=1\right]\right|\leq\negl(\lambda)~.
\]
For random variables $X$ and $Y$, where the dependence on $\lambda$ is implicit from context, we denote the above with $X\approx_c Y$. Similarly, we say that families of quantum states $\{ \rho_\secparam \}_{\secparam \in \N}$ and $\{ \sigma_\secparam \}_{\secparam \in \N}$ are computationally indistinguishable if they cannot be distinguished by QPT algorithms with more than negligible advantage.
\end{definition}

% \begin{definition}[One-Way Functions]
% \hnote{do we need this definition? We never use one-way functions directly}
% \lo{OK to nix if you think it's unnecessary}
%     Let $f:\{0,1\}^*\to\{0,1\}^*$ be a poly-time computable function. We say $f$ is a \textbf{(post-quantum secure) one-way function} if for any QPT algorithm $D$,
%     \[
%     \Pr_{x\gets\{0,1\}^\lambda}[D(1^\lambda,f(x))=x]\leq\negl(\lambda)~.
%     \]
% \end{definition}

\begin{definition}[Secret-Key Encryption]
\label{def:secret-key-enc}
    Let $\mathcal{E}=(\Gen,\Enc,\Dec)$ be a tuple of PPT algorithms, such that for all $\secparam \in \N$ and all $x\in\{0,1\}^*$, 
    \[
    \Pr\left[\Dec(\sk,\Enc(\sk,x))=x : \sk \gets{\Gen(1^\secparam)}\right]=1~.
    \]
    We say $\mathcal{E}$ is a \textbf{post-quantum encryption scheme} if for any QPT algorithm $A$ with oracle access to $\Enc(\sk,\cdot)$ and which outputs two equal-size lists of plaintexts, the following holds. With $(a_1\dots a_n),(b_1\dots b_n)\gets A^{\Enc(\sk,\cdot)}(1^\secparam)$ and $\sk\gets\Gen(1^\secparam)$, 
    \[(\Enc(\sk,a_1)\dots\Enc(\sk,a_n))\approx_c(\Enc(\sk,b_1)\dots\Enc(\sk,b_n))~.\]
\end{definition}
\begin{definition}[Commitment Scheme]
    A \textbf{commitment scheme} $\mathcal{C}$ is a pair $(\Setup,\Com)$ of PPT algorithms with the following syntax: 
    \begin{itemize}
        \item $\Setup(1^\secparam)$ outputs a random string $\pp$ (for ``public parameters''), which we assume contains the security parameter $1^\secparam$.
        \item $\Com(\pp, x ; r)$ is a deterministic function of $(\pp,x,r)$ where $r \in \{0,1\}^\secparam$.
    \end{itemize}
    %where $\pp \gets \Setup(1^\secparam)$ is randomized and wherer $c\gets\Com(\pp,x;r)$ is a deterministic function of its inputs and a randomness string $r$. 
    We say that $\mathcal{C}$ is \textbf{post-quantum computationally hiding} if for all QPT algorithms $A$, with $\pp\gets\Setup(1^\secparam)$ and $(x_0,x_1)\gets A(1^\secparam,\pp)$, the following statement holds:
    \[
        \left \{ \Com(\pp,x_0; r) : r \gets \{0,1\}^\secparam \right \}_{\secparam \in \N} \approx_c \left \{ \Com(\pp,x_1; r) : r \gets \{0,1\}^\secparam \right \}_{\secparam \in \N}~.
    \]    
   % $\secparam\in\N$, for any QPT algorithm $\mathcal{A}$ outputting two strings, the following holds: with $x,y\gets\mathcal{A}(1^\lambda)$, $\pp\gets \Setup(1^\lambda)$, and $r\gets\{0,1\}^\lambda$, 
    % \[\Com(\pp,x,r)\approx_c\Com(\pp,y,r)~.\]
    We say that $\mathcal{C}$ is \textbf{statistically binding} if for all (computationally unbounded) algorithms $B$, there is a negligible function $\negl$ such that the following holds for all $\secparam\in\N$: 
    \[\Pr\left[
    \begin{array}{c}
        x_1\neq x_2 \\
        \Com(\pp,x_1 ; r_1)=\Com(\pp,x_2 ; r_2)
    \end{array}
    :
    \begin{array}{c}
        \pp\gets \Setup(1^\lambda) \\
        (r_1,r_2,x_1, x_2) \gets B(\pp)
    \end{array}
    \right]\leq\negl(\lambda)~.\]
\end{definition}
\begin{definition}[Zero-Knowledge Proof]
Let $L \subseteq \{0,1\}^*$ be a language. A pair $(P,V)$ of classical PPT interactive algorithms is an \textbf{honest-verifier, (post-quantum) computational zero-knowledge proof system} for $L$ with completeness error $c(\secparam)$, soundness error $s(\secparam)$ if the following hold.
\begin{enumerate}
    %\item %\textbf{(Efficiency)} $V,P$ is a PPT algorithm. $(P,V)$ is called \emph{doubly-efficient} if $P$  is also PPT. 
    \item \textbf{(Completeness)} For all $\secparam \in \N$ and for all $x\in L$, there exists a string $w$ (called the witness) such that
    \[
\Pr \left [ (P(w) \interact  V)(1^\secparam,x) \text{ accepts} \right] \geq 1 - c(\secparam)
    \]
    %$V$ outputs ``accept'' with probability at least $1-c(\secparam)$ in $[P(1^\secparam,x),V(1^\secparam,x)]$.
    \item \textbf{(Soundness)} For all $\secparam \in \N$ and $x\not\in L$, for all provers $P^*$, 
    \[
        \Pr \left [(P^*\interact  V)(1^\secparam,x)  \text{ accepts} \right] \leq s(\secparam)~.
    \]    
    %$V$ outputs ``reject'' with probability at least $1-s(\secparam)$ in $[P(1^\secparam,x),V(1^\secparam,x)]$.
    \item \textbf{(Honest-Verifier Post-Quantum Computational Zero Knowledge)} There exists a PPT simulator $\Sim$ such that for all $x\in L$, the following holds: 
    \[ \Big \{ \Sim(1^\secparam,x) \Big \}_{\secparam \in \N} \approx_c \Big \{ \View_V(P(w) \interact  V)(1^\secparam,x) \Big \}_{\secparam \in \N} \]
    for all witnesses $w$ that make the completeness condition hold.
\end{enumerate}
\end{definition}

\begin{fact}[Existence of post-quantum encryption, commitments, and zero-knowledge proofs]

Assuming the existence of post-quantum one-way functions, there exist the following:
\begin{enumerate}
    \item Post-quantum secret-key encryption~\cite{katzlindell},
    \item Post-quantum computationally hiding and statistically binding commitments~\cite{naor91}, and
    \item Post-quantum honest-verifier computational zero-knowledge proof systems for any $L \in \mathsf{NP}$ with perfect completeness (i.e., zero completeness error) and negligible soundness~\cite{gmw91}. %This proof system is also doubly-efficient, when $P$ is given on input $x$ an $\sf NP$ witness $w$ for membership of $x$ in $L$ as auxiliary input
\end{enumerate}
\end{fact}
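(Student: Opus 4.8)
\emph{Overall approach.} All three items are classical theorems, so the work is to recall the constructions and verify that every security reduction is fully black-box — hence relativizes to quantum adversaries — and that the zero-knowledge simulator can be taken non-rewinding. The common starting point is that a post-quantum one-way function yields, via the (black-box) H{\aa}stad--Impagliazzo--Levin--Luby reduction, a post-quantum pseudorandom generator $G$, and then, via the (black-box) Goldreich--Goldwasser--Micali construction, a post-quantum pseudorandom function family $\{F_\sk\}$. In each case a distinguisher is converted into an inverter using only oracle calls to the distinguisher with polynomial overhead, so a QPT distinguisher would yield a QPT inverter, contradicting post-quantum one-wayness.

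\emph{Secret-key encryption.} Given a post-quantum PRF $F_\sk:\{0,1\}^\secparam\to\{0,1\}^\secparam$, put $\Gen(1^\secparam)=\sk$, $\Enc(\sk,x)=(r,F_\sk(r)\oplus x)$ for a fresh uniform $r$ (processing $|x|>\secparam$ block by block), and let $\Dec$ be the obvious inverse; perfect decryption is immediate. For the notion of \Cref{def:secret-key-enc}, take a QPT adversary $A$ with an encryption oracle that outputs two equal-length plaintext lists. In the hybrid where $F_\sk$ is replaced by a uniformly random function, every ciphertext is a one-time pad under an independent uniform pad except when two of the polynomially many sampled $r$'s collide (negligible probability), so in that hybrid the two ciphertext vectors are statistically — hence computationally — indistinguishable; the original and hybrid worlds are computationally indistinguishable by post-quantum PRF security, the reduction simulating $A$ and its encryption oracle with its own function oracle. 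This closes the argument.

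\emph{Commitments and zero-knowledge proofs.} For commitments, instantiate Naor's scheme with a length-tripling post-quantum PRG $G:\{0,1\}^\secparam\to\{0,1\}^{3\secparam}$: let $\Setup(1^\secparam)$ output a uniform $\pp\in\{0,1\}^{3\secparam}$ and set $\Com(\pp,b;s)=G(s)\oplus(b\cdot\pp)$, extended to strings by committing bitwise with independent seeds. Statistical binding holds since a decommitment collision needs $G(s_0)=G(s_1)\oplus\pp$ for some seed pair, and a union bound over the at most $2^{2\secparam}$ pairs rules this out for all but a $2^{-\secparam}$ fraction of $\pp$; computational hiding reduces (black-box, hence post-quantumly) to the fact that $\{G(s)\}$ and $\{G(s)\oplus\pp\}$ are each computationally close to uniform on $\{0,1\}^{3\secparam}$, with a hybrid over the bits for strings. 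For zero-knowledge, use the GMW protocol for an $\mathsf{NP}$-complete language, say graph $3$-coloring, with this commitment scheme: the prover commits to a uniformly random recoloring of a witness coloring, the honest verifier names a uniform edge, and the prover opens its two endpoints. Completeness is perfect (an honest committer can always open, even for the rare bad $\pp$); statistical binding forces a cheating prover to be caught with probability $\ge 1/|E|$ on any non-$3$-colorable instance, and $O(|E|\cdot\secparam)$ parallel repetitions (public-coin, so soundness amplifies exponentially, while honest-verifier zero knowledge and completeness are preserved) drive the soundness error to $\negl(\secparam)$. Since the verifier's challenge is a uniform edge, the simulator $\Sim$ need not rewind: it samples $(u,v)$ first, outputs commitments in which $u,v$ get two independent uniform distinct colors and all other vertices a fixed color, then opens $u$ and $v$; this transcript is computationally indistinguishable from the real one by the computational hiding of the unopened commitments via a black-box hybrid argument, so indistinguishability holds against QPT distinguishers. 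Combining the three items proves the Fact.

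\emph{Where to be careful.} There is no genuine obstacle here — everything is known — but two points deserve attention. First, one must confirm that the primitive-building reductions invoked throughout (especially HILL's PRG-from-OWF) are black-box, since that is precisely what lets post-quantum one-wayness be inherited by the PRG, PRF, encryption, commitment, and hence the ZK system. Second, for the zero-knowledge clause it is essential that the simulator avoids rewinding — quantum rewinding is delicate — which is exactly why we restrict to honest verifiers and use the challenge-first simulator above, whose analysis is a plain reduction to computational hiding with no rewinding at all.
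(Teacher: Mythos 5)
The paper states this as a citation-backed Fact without a proof, justifying it only in a one-line remark that the standard constructions have black-box security reductions, so post-quantum one-wayness lifts through to the PRG, PRF, encryption, commitments, and ZK system; your proposal spells out exactly that argument in detail (HILL/GGM, PRF-based CPA encryption, Naor's commitment, GMW with a challenge-first honest-verifier simulator and parallel repetition). This is the same approach as the paper, just made explicit, and your construction and reductions are all correct for the honest-verifier, perfect-completeness variant the Fact requires.
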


\begin{remark}
    Although the referenced constructions of secret-key encryption, commitments, and zero-knowledge proofs do not explicitly analyze quantum adversaries, it is straightforward to see that the computational security properties are inherited directly from the underlying one-way function. When the one-way functions are post-quantum secure, the security holds against QPT algorithms.
\end{remark}

%Assume that post-quantum secure one-way functions exist. Then there exist secure encryption schemes, as well as computationally hiding and statistically binding commitment schemes. Additionally, for any language $L\in{\sf NP}$, there exists a post-quantum honest-verifier computational zero-knowledge proof system for $L$ with perfect completeness, negligible soundness. 

% \begin{remark}
%     A discussed in the introduction, we combine position commitments with zero-knowledge proofs to obtain zero-knowledge position verification. In our construction, we only need \emph{classical} honest-verifier zero-knowledge proof systems for $\mathsf{NP}$ (i.e., the honest verifier and honest prover are PPT algorithms), with the slight adjustment that the indistinguishability in the zero-knowledge property holds with respect to QPT algorithms. The reason we don't need a more general form of quantum zero-knowledge is that our construction of position commitments and zero-knowledge position verification are largely classical; the zero-knowledge proofs 
% \end{remark}

\subsection{Position Verification}

We present the formal definition of position verification protocols for arbitrary spacetime regions $R \subseteq \R^d \times \R$ 
(which generalizes the standard notion of position verification). 

\begin{definition}[Position Verification]\label{def:pv}
    Let $V=\{V_1,\dots,V_k\}$ be a set of verifier algorithms. Let $ R \subseteq \R^d \times \R$ be a set called the \textbf{allowable prover spacetime positions}, and 
    $P=\{P_\alpha\}_{\alpha\in R}$ be a family of honest prover algorithms such that $\pos_t(P_\alpha)=L$ for each allowed spacetime point $\alpha=(L,t)\in R$. Then $\Pi=(P,V,R)$ is a \textbf{position verification protocol} with completeness error $c(\secparam)$ and position security $s(\secparam)$ against a class $\mathscr{C}$ of spoofers if the following are true.
    \begin{enumerate}
        \item \textbf{(Completeness)} For all allowed points $\alpha \in R$,
        \[
        \Pr \left [(P_\alpha\interact  V)(1^\secparam)  \text{ accepts} \right] \geq 1 - c(\secparam).
        \]
        \item \textbf{(Position Security)} For all prover coalitions $P^*=\{P^*_1,\dots,P^*_m\}\in\mathscr{C}$ where $\pos(P^*_i)\not\in R$ for all $i$, it holds that  
        \[
        \Pr \left [(P^*\interact  V)(1^\secparam)  \text{ accepts} \right] \leq s(\secparam)~.
        \]
    \end{enumerate}
    If $|R|=1$, we call $\Pi$ a \textbf{singleton} position verification protocol.
    % We describe and analyze the protocol in two phases:
    % \begin{enumerate}
    %     \item ${\rm Challenge}$ -- $P^*$ initiates the protocol with a start message. Subsequently, the verifiers use a portion $z$ of their shared randomness to generate challenge messages $\rho_1,\dots,\rho_k$ which are sent to $P^*$ by $V_1,\dots,V_k$ respectively.
    %     \item ${\rm Response}$ -- Upon receiving the challenges, $P^*$ runs some procedure $s_1,\dots,s_k\gets{\rm Respond}(\rho_1,\dots,\rho_k)$ and sends each response $s_i$ to a corresponding verifier $V_i$.
    %     \item ${\rm Verify}$ -- Once all of the responses are received, the verifiers consolidate their information -- without loss of generality, $V_1$ gathers the timestamped execution transcript $T_i$ of each verifier $V_i$. $V_1$ runs some procedure ${\rm Verify}(z,T_1,\dots,T_k)$ to accept or reject, where $z$ is the aforementioned randomness used to generate challenges.
    % \end{enumerate}
    % We call such a procotol \textbf{classical-output} if $s_1,\dots,s_k$ are classical strings, and 
     
\end{definition}
We now introduce a special subclass of position verification protocols that will be needed for our construction of position commitments and zero-knowledge position proofs.
\begin{definition}[Nice Protocols]
\label{def:one-shot-mostly-classical}
    We say that a family of singleton position verification protocols $\{\Pi^{\mathrm{single}}_i\}_{i\in I}$ is \textbf{nice} if the following conditions are true:
    \begin{enumerate}
        \item \textbf{(One-shot, classical-response)} Each $(P=\{P_{(L,t)}\},V,R=\{(L,t)\})=\Pi^{\mathrm{single}}_i\in \Pi$ has the following structure:
        \begin{itemize}
    
            \item At the start of the protocol, the verifiers $V_1,\dots,V_k$, using a shared random string $s$, jointly prepare a quantum state $\ket{\psi}$ (which might be entangled between all the verifiers, or could only be possessed by a subset of verifiers) and a set of classical messages $(x_1,\ldots,x_k)$ which is a deterministic function of $s$. Then, each verifier $V_i$ sends a message containing their share of $\ket{\psi}$, along with $x_i$, to the purported prover at the spacetime point $(L,t)$.
            \item When the honest prover $P_{(L,t)}$ receives the messages $x_1,\ldots,x_k$ as well as the state $\ket{\psi}$, it performs some measurement on $\ket{\psi}$ based on $x_1,\ldots,x_k$, and sends the classical measurement result $y$ to all the verifiers.
            
           % \item and send it to $L$ such that all parts arrive simultaneously at time $t$.
          %  \item When $P$ receives $\ket{\psi}$, it performs some computation and responds instantaneously with classical messages $s_1,\dots,s_n$ bound for $V_1,\dots,V_k$, respectively.
            \item Let $t_1,\dots, t_k$ be the clock times that $y$ is received by the verifiers $V_1,\ldots,V_k$, respectively. At the end of the protocol, the verifiers collectively compute some deterministic predicate $W(s,y,t_1,\dots,t_k)$ and accept if and only if the output is $1$.
        \end{itemize}

        \item \textbf{(Coverage)} For every finite spacetime region $S\subset\R^d\times\R$, there exist spatial points $X^S_1,\dots,X^S_k\in\R^d$ whose convex hull contains $S$, and which additionally satisfy the following. For every point $\alpha\in S$, there is a protocol $(P,V,R=\{\alpha\})\in \Pi$ in which $V_1,\dots,V_k$ are positioned statically\footnote{The fact that the verifiers are stationary is not used explicitly, only the fact that the positions at each time are publicly known. We do not, therefore, rule out the possibility of accomplishing our protocols with mobile verifiers (e.g. medium-earth orbit satellites, like GPS)} at $X^S_1,\dots,X^S_k$, respectively.
    \end{enumerate}
\end{definition}

For conciseness of notation, we will typically use the singular form (a nice position verification protocol) to mean a nice family of position verification protocols.

Although \Cref{def:one-shot-mostly-classical} clearly constrains the space of all position verification protocols, we argue that it actually describes a very natural structure which occurs in many of the position verification schemes from prior literature. The first condition of \Cref{def:one-shot-mostly-classical} describes this natural structure. As for the second condition, it has been widely noted by prior work that a necessary condition for secure position verification is that the target point be located within the \emph{convex hull} of the verifiers' positions. It is also the case, in all prior position verification protocols we have come across, that this is a sufficient condition. 

For example, the $f$-BB84 protocol described in the introduction, and illustrated in \Cref{fig:1d-pos-ver}, is structured in the form of a nice position verification protocol. We will state this more concretely.
\begin{remark}[$f$-BB84 is nice]
    For all bounded spacetime regions $S\subset\R^d\times\R$, let $X^S_1,\dots,X^S_{d+1}\in\R^d$ be verifier positions whose convex hull contains $S$. Then, for each $\alpha\in S$, let $(P_\alpha,V_\alpha)$ be the $f$-BB84 prover and verifier algorithms for $\alpha$, respectively, where the verifiers are positioned at $X^S_1,\dots,X^S_{d+1}$. Then $\{(P_\alpha,V_\alpha,\{\alpha\})\}_{S\subset \R^d\times\R,\alpha\in S}$ is nice family of singleton position verification protocols. The protocol is perfectly complete, and has position security $1-\varepsilon$ against the class of adversaries who pre-share $O(\log n)$ qubits of entanglement, for some constant $\varepsilon>0$~\cite{bluhm2022single}. Position security can be strengthened to vanish asymptotically via parallel repetition, which was shown in \cite{escolafarras2025quantum}.
\end{remark}
As mentioned in the introduction, \Cref{def:one-shot-mostly-classical} also applies to the protocols of \cite{liu2022beating} and \cite{unruh2014quantum}.

\section{Position Commitments}
\label{sec:position-commitments}

In this section we formally define position commitments, and present our main construction. 
%move onto our technical results, starting with definition and construction of our new position commitment primitive. The motivation and intuition behind creating this object was described in \Cref{sec:technical-overview}, and we now go on to state its formal properties.
% A key development in the classical theory of zero-knowledge proofs was the introduction of cryptographic commitments, which allow a prover to hide a large chunk of their knowledge from the verifier, while still ensuring soundness through the binding property. We claim that when we use the analogous intuition to think about zero-knowledge location proofs, the natural primitive that comes to mind is the cryptographic \emph{position commitment}. 

% We will define this primitive also to be hiding and binding, like its classical counterpart, but with the additional requirement that the value being committed to is \emph{true} -- i.e. that it reflects some actual, meaningful statement about reality, which we will interpret to be a point in spacetime occupied by the prover.\footnote{See section \ref{PV-redefinition} for a more in-depth discussion about what this ``meaningful property'' should be able to encompass.}
\begin{definition}[Position Commitment]\label{position-commitment}
    Let $V = \{V_1,\ldots,V_k\}$ be verifier algorithms, let $S\subset\R^d\times\R$ be a finite set called the committable points, and let $P=\{P_\alpha\}_{\alpha\in S}$ be a family of prover algorithms. Then, the tuple $(P,V,S)$ is a \textbf{position commitment scheme} if it satisfies the properties below. The interaction between $P$ and $V$ is split into two phases, $\Commit$ and $\Reveal$, which have the following syntax.
    \begin{enumerate}
        \item The input to the $\Commit$ phase is the security parameter $1^\secparam$. The $\rm Commit$ phase takes place in some time window $[t_{\rm init},t_{\rm final}]$, where $t_{\rm init}\leq t_{\rm min}(S)\leq t_{\rm max}(S)\leq t_{\rm final}$. At time $t_{\rm final}$, the verifiers jointly output a state $\rho$, called the \textbf{commitment state}.
        
        \item The input to the $\Reveal$ phase is the security parameter $1^\secparam$, the commitment state $\rho$, and a claimed position $\alpha^*\in S$. The output of the $\Reveal$ phase is the verifiers' accept or reject decision.
    \end{enumerate}
    % We divide the execution of this protocol into two distinct phases, 1. $\rm Commit$ and 2. $\rm Reveal$, which may be separated by an arbitrarily large time gap. The delineation between them is as follows:
    % \begin{itemize}
    %     \item Let $C$ be the internal state across all verifiers at the end of the $\rm Commit$ phase. We refer to $C$ as the \textbf{commitment} to $P$'s position.
    %     \item At the start of the $\rm Reveal$ phase, $P$ must announce a claimed position $(L^*,t^*)$. At the end of this phase, the verifiers decide whether to accept or reject.
    % \end{itemize}
    Let $\Commit_{P \interact V}$ and $\Reveal_{P\interact V}$ denote the interaction between prover $P$ and verifiers $V$ in the $\Commit$ and $\Reveal$ phases, respectively. Let $\mathscr{C}$ be some class of spoofing provers. A position commitment scheme has \textbf{completeness error} $c(\secparam)$, \textbf{statistical position binding} $s(\secparam)$ against $\mathscr{C}$, and \textbf{computational honest-verifier hiding} if it satisfies the following properties: 
    \begin{enumerate}
        \item (\emph{Completeness}) 
        For all $\secparam \in \N$, for all $\alpha \in S$,
        \[
            \Pr \Big [ \Reveal_{P_\alpha \interact V}(1^\secparam,\rho,\alpha) \text{ accepts} : \rho \leftarrow \Commit_{P_\alpha \interact V}(1^\secparam) \Big ] \geq 1 - c(\secparam)~.
        \]
        %For all $(L^*,t^*)\in S$, if $P$ claims a point $(L^*,t^*)$ in the $\rm Reveal$ phase, and $\pos_{t^*}(P)=L^*$, then $P$ causes the verifiers to accept with probability at least $1-\delta$.
        \item (\emph{Statistical Position Binding}) For all prover sets $P^*\in\mathscr{C}$, with probability at least $1-s(\lambda)$ over $\rho \leftarrow \Commit_{P^* \interact V}(1^\secparam)$, there exists a spacetime point $\alpha_\rho$ such that two properties hold\footnote{In simpler terms, the position binding property essentially forces the adversary to (1) commit to a unique spacetime point, and (2) have actually occupied this point, in order to have any hope of convincing the verifiers in the $\Reveal$ phase.}:
        \begin{enumerate}
            \item Some prover in $P^*$ occupied the point $\alpha_\rho$.
            \item For all other points $\alpha'\neq\alpha_\rho$, and all (computationally unbounded) algorithms $A$, \[\Pr\left[\Reveal_{A\interact V}(1^\secparam,\rho,\alpha')~\mathrm{accepts}\right]=0.\] 
        \end{enumerate}

        \item (\emph{Computational Honest-Verifier Hiding}) There exists a QPT simulator $\Sim$ such that for all $\alpha\in S$, and all times $t_{\rm init}\leq \tau\leq t_{\rm final}$,
            \[ \Big \{ \Sim(1^\secparam,\tau) \Big \}_{\secparam \in \N} \approx_c \Big \{ \View_V^\tau(P_\alpha \interact  V)(1^\secparam) \Big \}_{\secparam \in \N} .\]
    \end{enumerate}
    
    % \paragraph{Honest-Verifier Hiding} We say that a position commitment protocol satisfies computational honest-verifier hiding if there exists an efficient simulator $\Sim$ satisfying the following. The output of ${\Sim}(1^\lambda)$ is quantum poly-time indistinguishable from the combined view of the verifiers $V$ during the $\rm Commit$ phase, when interacting with the honest prover $P$. We say that the hiding is instead \textbf{statistical} if, furthermore, the above two distributions are negligibly close in statistical distance.

    Additionally, we say that a commitment scheme is \textbf{mostly-classical} if it has the following two properties. (1) The output of the $\rm Commit$ phase is classical, and (2) the $\rm Reveal$ phase is a classical protocol where no random coins are flipped by the verifiers.
\end{definition}
\begin{remark}
    Note that the position binding property serves two simultaneous roles, one which is like classical binding security of commitments, and one which is like the position security of position verification protocols. We could imagine separating position binding into a binding property and a position security property. An example of when we might want this is if a pair of provers at locations $\alpha_1$ and $\alpha_2$ run the position commitment scheme to create a commitment $\rho$, such that ${\rm Reveal}(1^\secparam,\rho,\alpha_1)$ and ${\rm Reveal}(1^\secparam,\rho,\alpha_2)$ both accept. This breaks the above definition of position binding, but in some sense the provers did not lie about their positions -- there could be an application where we even \emph{prefer} to give the provers this flexibility. We think it may be interesting for future work to consider variants of this, where the two roles are separated.
\end{remark}

\subsection{Constructing Secure Position Commitments}
\label{sec:pc-construction}

\begin{algorithm}[p]
\caption{
 Encrypt-Then-Verify Position Commitment Protocol. 
 }
 Let $W_{\alpha}$ be the verification predicate for $\Pi_{\alpha}$, as described in \Cref{def:one-shot-mostly-classical}. Let $(\Setup,\Com)$ be a classical commitment scheme, and let $(\Gen,\Enc,\Dec)$ be a secret-key encryption scheme (each with post-quantum security). We construct a position commitment scheme $\mathcal{C}=(P,V,S)$ as follows.
\begin{algorithmic}[1]
\label{def:pc-construction}
    \Procedure{Commit}{$1^\lambda$}
        \State Some verifier $V^*$ is chosen as a coordinator. At time $t_{\rm init}$, $V^*$ broadcasts $\pp\gets\Setup_\Com(1^\lambda)$.
        \State After receiving $\pp$, $P$ generates $\sk\gets \Gen(1^\secparam)$ and $c\gets\Com(\pp,\sk,r)$ with $r\gets\{0,1\}^\lambda$.
        \State Let $T$ be the maximum time a message takes to travel between any point in $S$ and any verifier $V_i$. $P$ sends $c$ to each verifier, with each message timed to arrive at $t_1=t_{\rm init}+2T$.
        \State Immediately after sending $c$, $P$ runs the following:
        \State Choose a time $t^*$ such that $\pos_{t^*}(P)=L^*$ and $(L^*,t^*)\in S$.
        \For{each $\alpha\in S$ \textbf{in parallel}}
            \If{$\alpha=(L^*,t^*)$}
                \State At time $t^*$, receive challenge $x_1\dots x_k,\ket{\psi}$ from the verifiers.
                \State Measure $\ket{\psi}$ according to $x_1,\dots,x_k$ to produce $y$, then broadcast $(\alpha,{\rm Enc}(\sk,y))$.
            \Else~(when $\alpha\neq(L^*,t^*)$)
                \For{each $i$ in $1\dots k$}
                    \State Compute the time $t_i$ that $V_i$ expects to receive the prover's response in $\Pi_{\alpha}$.
                    \State Send $(\alpha,{\rm Enc}(\sk,\bot))$ directionally to $V_i$, such that it arrives at time $t_i$.
                \EndFor
            \EndIf
        \EndFor
        \Statex
        \State Meanwhile, $V_1,\dots,V_k$ jointly run the following program, starting at time $t_1$:
        \For{each $\alpha\in S$ \textbf{in parallel}}
            \State Take some shared randomness $r_{\alpha}$ and use it to prepare a challenge $\ket{\psi},x_1,\dots,x_k$
            \State Each $V_i$ sends $x_i$, and its respective portion of $\ket{\psi}$, to the spacetime point $\alpha$.
            \State Each verifier timestamps and records all prover messages to their transcript.
        \EndFor
        \State Let $t_{\rm final}$ be the latest arrival time of any prover message. Let $M$ be the combined transcript of prover messages received by the verifiers, and let $s$ be the verifiers' entire shared randomness.
        \State At $t_{\rm final}$, the verifiers output $\rho=(\pp,c,M,s)$
    \EndProcedure
    \Statex
    \Procedure{Reveal}{$1^\lambda,\rho=(\pp,c,M,s,(L^*,t^*)$}
        \State $P$ sends $(L^*,t^*)$, along with $\sk$ and $r$, to $V^*$.
        \State $V^*$ checks $c=\Com(\pp,\sk,r)$; reject if invalid.
        \State Verifiers create an empty list $A$ of accepting positions.
        \For{$\alpha\in S$}
        \State Each $V_i$ extracts the prover message labeled $(\alpha,\dots)$ from their transcript of $\rm Commit$, letting $t_i$ be its timestamp, and decrypts it with ${\rm Dec}(\sk,\cdot)$. If any messages are $\bot$, or if the decryption differs between verifiers, \textbf{continue} to the next $\alpha$. Otherwise, let $y$ be the decryption.
        \State Verifiers compute $W_{\alpha}(s_{\alpha},y,t_1,\dots,t_k)$. If it equals 1, then they add $\alpha$ to $A$.
        \EndFor
        \State Accept if $A=\{(L^*,t^*)\}$, and reject otherwise.\label{line:pc-accept-check}
    \EndProcedure
\end{algorithmic}
\end{algorithm}

\begin{lemma}\label{thm:construction-secure}
    Assume there exists a nice family of singleton position verification protocols $\{\Pi_\alpha=(P_\alpha,V_\alpha,\{\alpha\})\}_{\alpha\in S}$ for some finite spacetime region $S$, with completeness $c(\secparam)$ and position security $s(\lambda)$ against spoofers who pre-share at most $E(\lambda)$ qubits of entanglement. Assume that post-quantum one-way functions exist. Then the construction in \Cref{def:pc-construction}, instantiated with the aforementioned primitives, is secure under \Cref{position-commitment} with completeness $c(\lambda)$, statistical position binding $|S|(s(\lambda)+\negl(\lambda))$ against spoofers sharing $E(\lambda)/2$ bits of entanglement, and computational honest-verifier hiding. Additionally, $\mathcal{C}$ is mostly-classical.
\end{lemma}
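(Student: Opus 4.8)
The plan is to establish the four assertions of the lemma — that $\mathcal C$ is mostly-classical, complete, computationally honest-verifier hiding, and statistically position binding — in roughly that order of difficulty, the last being by far the hardest. Mostly-classical is immediate from the syntax of \Cref{def:pc-construction}: $\Commit$ outputs $\rho=(\pp,c,M,s)$ with all four components classical, and $\Reveal$ is a coin-free classical exchange. For completeness I would fix $\alpha=(L^*,t^*)\in S$ and trace the honest $P_\alpha$: its opening satisfies $c=\Com(\pp,\sk,r)$ by construction, so the $\Com$-check passes; for the real instance the prover runs $\Pi_\alpha$ faithfully but encrypts its response $y$, and since decryption is perfectly correct the verifiers recover $y$ and, by completeness of $\Pi_\alpha$, $W_\alpha(s_\alpha,y,t_1,\dots,t_k)=1$ except with probability $c(\secparam)$, so $\alpha$ enters $A$; meanwhile every dummy instance decrypts to $\bot$ at every verifier and is skipped, so $A=\{\alpha\}$ and $\Reveal$ accepts. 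The only nonroutine point is confirming that the honest prover can meet every timing deadline (receive $\pp$, deliver $c$ by $t_1$, and deliver each real or dummy response at the clock time the corresponding $\Pi_{\alpha'}$ expects), which follows from the coverage clause of \Cref{def:one-shot-mostly-classical} together with choosing $t_{\rm init}$ sufficiently far before $t_{\min}(S)$ (a fixed multiple of the diameter of $S$).

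For hiding I would exhibit the simulator $\Sim(1^\secparam,\tau)$ that samples $\pp\gets\Setup(1^\secparam)$, a fresh key $\sk$ with commitment $c\gets\Com(\pp,\sk,r)$, and the verifiers' shared randomness $s$ (which fixes all instance challenges), and then assembles the verifiers' joint state at time $\tau$, inserting a ciphertext $\Enc(\sk,\bot)$ for every instance and verifier whose recorded arrival time is at most $\tau$. The key observation is that the entire \emph{structure} of the honest view at time $\tau$ — which challenges were sent, which ciphertexts have arrived and their timestamps, the shape of the commitment — is a fixed function of the public parameters and carries no information about the true point $\alpha$; the only $\alpha$-dependence is that the one ciphertext belonging to instance $\alpha$ carries plaintext $y$ rather than $\bot$. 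Since $\sk$ appears nowhere in the view before $\Reveal$, a single appeal to the multi-message CPA security of the encryption scheme gives $\{\Sim(1^\secparam,\tau)\}\approx_c\{\View_V^\tau(P_\alpha\interact V)(1^\secparam)\}$ for every $\alpha$. (One should arrange that the real-instance response reaches all verifiers in the same ciphertext layout as a dummy response, so that this layout itself leaks nothing about $\alpha$.)

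Position binding is the technical heart. First, statistical binding of $\Com$ gives that, except with probability $\negl(\secparam)$ over $\pp$, the string $c$ in $\rho$ has at most one preimage key $\sk_\rho$, and any $\Reveal$-adversary passing the $\Com$-check must hand over exactly $\sk_\rho$; hence the accept-set computed in $\Reveal$ is a deterministic function $A_\rho$ of $\rho$, and the $\Reveal$ phase can accept only for the unique $\alpha'$ with $A_\rho=\{\alpha'\}$, if any. Taking $\alpha_\rho$ to be that point makes property~(b) hold for free, even against unbounded $A$, and reduces the lemma to property~(a): whenever $A_\rho=\{\alpha_\rho\}$, some prover in $P^*$ occupied $\alpha_\rho$. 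I would prove the stronger statement that, except with probability $|S|(s(\secparam)+\negl(\secparam))$ over $\rho\gets\Commit_{P^*\interact V}$, every $\alpha\in A_\rho$ was occupied by some prover in $P^*$; a union bound over $\alpha\in S$ reduces this to: for each $\alpha$ not occupied by any prover in $P^*$, $\Pr[\alpha\in A_\rho]\le s(\secparam)+\negl(\secparam)$. For the latter I would reduce to position security of $\Pi_\alpha$, building a spoofing coalition $Q$ for $\Pi_\alpha$ with no prover at $\alpha$: $Q$ consists of all of $P^*$, a purely classical helper broadcasting a fresh $\pp$ in the role of $V^*$, and helpers at $X_1^S,\dots,X_k^S$ that internally generate and transmit the challenges of all the other $|S|-1$ singleton-PV instances, with the genuine $\Pi_\alpha$-challenge (received from $\Pi_\alpha$'s verifiers, who by the coverage clause sit at the $X_i^S$) routed into the slot for instance $\alpha$. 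Since $\sk$ is generated \emph{inside} $P^*$, the coalition learns it and can emit the plaintext $y_i$ (decrypting instantly) in place of each ciphertext it would have sent, so $\Pi_\alpha$'s verifiers see exactly the responses and timestamps that decide whether $\alpha\in A_\rho$; position security of $\Pi_\alpha$ then caps $\Pr[\alpha\in A_\rho]$ — provided $Q$ lies in the $E(\secparam)$-qubit class.

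That proviso is the main obstacle. $P^*$ is only promised $E(\secparam)/2$ qubits of entanglement, so $Q$ has $E(\secparam)/2$ qubits of slack for everything else; the $V^*$-helper is classical, but the helpers simulating the other $|S|-1$ instances must prepare and distribute each challenge state $\ket{\psi_{\alpha'}}$ from the $X_i^S$, which for a general nice protocol is entangled and therefore costs pre-shared entanglement. The argument has to bound this cost by $E(\secparam)/2$: this is automatic for protocols such as $f$-BB84, where each $\ket{\psi}$ is held by a single verifier and carries no inter-verifier entanglement to reproduce, and in general it requires a careful count of how many of the $|S|-1$ challenge states can be simultaneously in flight and how large they are. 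I also expect fiddly-but-routine care around the timing bookkeeping of the reduction — the simulated challenges must leave the $X_i^S$ at the correct clock times, and the inherited $P^*$-provers must see a challenge distribution identical to the genuine commitment protocol — and around degenerate cases (a $\pp$ admitting a $\Com$-collision, or an instance whose decrypted responses disagree across verifiers or equal $\bot$), each of which either costs only $\negl(\secparam)$ or simply forces $\alpha\notin A_\rho$, consistently with $\Pi_\alpha$ rejecting.
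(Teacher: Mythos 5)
Your treatment of mostly-classical, completeness, and hiding is correct and essentially matches the paper. The position-binding reduction, however, takes a genuinely different route from the paper's, and it has two real problems.

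First, the claim that the reduction obtains $\sk$ ``since $\sk$ is generated inside $P^*$'' is not sound. $P^*$ is a malicious (quantum) prover coalition: it is under no obligation to compute $\sk$ as a classical value sitting in some register that a co-located observer can read; it only has to emit some commitment string $c$ that is later opened with some key. The paper avoids this by averaging to fix a good $(\pp^*,c^*)$, invoking \emph{perfect} binding of $\Com$ to obtain the unique $(\sk,r)$ determined by $c^*$, and handing $\sk$ to the reduction as nonuniform advice. Second, and more structurally, your reduction builds a coalition $Q$ out of $P^*$ plus helper provers stationed at the verifier positions $X_1^S,\dots,X_k^S$ that synthesize the challenge states $\ket{\psi_{\alpha'}}$ of the other $|S|-1$ instances. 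You flag, but do not resolve, the fact that for a general nice protocol these states are entangled across verifiers, so the helpers' preparation eats into the spoofing entanglement budget; the lemma is claimed for \emph{all} nice protocols, not only those (like $f$-BB84) where $\ket{\psi}$ lives with a single verifier, so this is a genuine gap. The paper avoids the helper architecture altogether: its reduction provers $\tilde P_i$ occupy exactly the positions of $P^*_i$ during $\Commit$, are initialized with the joint prover state $\omega$ taken immediately after $c$ is announced (together with $\sk$) as nonuniform advice, run $P^*$ forward as a subroutine, and merely decrypt and forward the simulated $P^*$'s $(\alpha^*,\cdot)$ messages; the $E/2\to E$ doubling is what accounts for the advice. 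If you want to keep your helper construction, you would need to restrict the lemma to nice protocols whose challenge states are unentangled across verifiers (or add an explicit entanglement budget for challenge preparation to the hypotheses); as written it does not establish the claim in the generality stated.
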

\begin{proof} We first show that $\mathcal{C}$ is mostly-classical, which follows directly from the ``one-shot, classical-response'' property of the nice protocol $\Pi$, as well as the definitions of the honest prover and verifiers which do not introduce any quantum states, or flip any random coins in $\Reveal$. Now we prove the rest of the claim:
    \paragraph{Completeness:} Take an honest prover $P_\alpha$. At some point during $\rm Commit$, $V$ and $P_\alpha$ will execute the protocol $\Pi_\alpha$, except that $P_\alpha$'s messages are encrypted. Since $\Pi_\alpha$ has the structure of a nice protocol, however, we can argue that by decrypting the prover's messages at reveal time and resuming verification, we reach the same state as an unmodified execution of $\Pi_\alpha$. Thus the verifiers accept with probability $1-c(\lambda)$.
    \paragraph{Computational Honest-Verifier Hiding:} For hiding, we observe that no matter where $P$ is located, the transcript $M$ of messages each verifier $V_i$ during the $\rm Commit$ phase \emph{always} consists of receiving the commitment $c$, and then a set of encrypted messages from the prover whose timings are consistent with $\Pi_\alpha$, for every point $\alpha\in S$. This is made possible by the fact that $P_\alpha$ knows the precise time that each verifier expects each message in $\Pi_{\alpha'}$ for all $\alpha'\in S$, and has a sufficient head start on the verifiers to be able to send all of these messages in time. For any public parameters $\pp$, we can define a simulated transcript $M'_\pp$ as follows: take $\sk\gets\Gen(1^\secparam)$, $r\gets\{0,1\}^n$, and $c\gets\Com(\pp,\sk,r)$. Add $(t_1,c)$ to $M'_\pp$, where $t_1=t_{\rm init}+2T$ (recall that $T$ is the maximum time to reach any verifier from any point in $S$). Then add ``$V_i\gets(\alpha,{\rm Enc}({\rm sk},\bot))~@~t$'' to $M'_\pp$ for every time $t$ where each verifier $V_i$ would receive a message in $\Pi_\alpha$ for every $\alpha\in R$. Now, we define the simulator $\Sim$ to compute the following on input $(1^\secparam,\tau)$: Start simulating $V$, which outputs $\pp$ at time $t_{\rm init}$. Construct the timestamped transcript $M'_\pp$ as defined above. Then, simulate $V$ through time $\tau$, while feeding it messages exactly according to $M'_\pp$. Then, output $V$'s internal state.
    
    Once we are convinced that the simulated transcript matches the real transcript in terms of quantity and timing of messages, it follows straightforwardly from the hiding property of the classical commitment scheme, as well as the security of the secret key encryption scheme, that the output of the simulator is computationally indistinguishable from the verifiers' view during the honest commitment protocol.

    % \textbf{Statistical Binding Security:} Take $P$, $C$, $(L_1,t_1)$, and $(L_2,t_2)$ as set out in the definition of binding security, and assume that the probability of $P$ convincing the verifiers while claiming either location, given $C$ is the commitment, is non-negligible. Let $c$ be the commitment sent to the verifiers at the start of the protocol, which is contained within $C$. As defined, in order for the claimed location $(L_1,t_1)$ to pass verification, $P$ must have sent some $\rm sk_1$ and $r_1$ such that $c\gets\Com(\pp_\Com,{\rm sk}_1,r_1)$, and the decrypted prover messages under ${\rm sk}_1$ must contain exactly one successful PV execution, corresponding to a prover located at $(L_1,t_1)$. Analogously, in order for $(L_2,t_2)$ to work, the prover must send $\rm sk_2$ and $r_2$ such that the above holds. Given that $(L_1,t_1)\neq(L_2,t_2)$, and decryption is deterministic, this means ${\rm sk}_1\neq{\rm sk}_2$, which implies that the binding security of $\Com({\rm sk})$ was broken by $P$ if both verifications pass. Since this happens with negligible probability even if $P$ is computationally unbounded, there cannot be a non-negligible probability of both $(L_1,t_1)$ and $(L_2,t_2)$ passing verification.

    \paragraph{Statistical Position Binding:}
    Take a prover strategy $P^*$ which pre-shares at most $E/2$ qubits of entanglement.
    % , and let $P'$ be a version of $P^*$ which, at the conclusion of the $\rm Commit$ phase, gains access to the commitment $C$ and is allowed arbitrary unbounded computation. 
    Assume that $P^*$ breaks statistical position binding with probability $p$ -- i.e. with probability $p$ over $\rho\gets\Commit_{P^*\interact V}$, one of conditions (a) and (b) in the definition is not satisfied. Because $\rm (Setup,Com)$ is statistically binding, $P^*$ can only distinguish it from a perfectly binding scheme with negligible probability. Thus we can take an equivalent view which removes this event, and where we adjust all probabilities accordingly by a negligible amount. 
    
    So, we henceforth treat $\rm (Setup,Com)$ as perfectly binding, and say that $P^*$ breaks statistical position binding with probability at least $p-\negl(\lambda)$. Under this perfect binding, there cannot be two points $\alpha_1$ and $\alpha_2$ such that $\rho$ can open to both, since decryption under $\sk$ is deterministic and due to the check for a unique point on line \ref{line:pc-accept-check} of \Cref{def:pc-construction}. This means that with probability $p-\negl(\secparam)$, $P^*$ breaks position security by \emph{spoofing its position}. In other words, with probability $p-\negl(\secparam)$ over $\rho\gets\Commit_{P^*\interact V}$, there exists some $\Reveal$ strategy which succeeds in revealing $\rho$ at a point that $P^*$ never occupied. We can break this down into point-by-point cases: for any point $\alpha\in S$ occupied by no provers in $P^*$, let $q_\alpha$ be the probability that there exists some successful strategy for revealing to $\alpha$. By a union bound over points in $S$, we can take some maximizing choice $\alpha^*$ such that $q_\alpha^*\geq (p-\negl(\lambda))/|S|$.
    % Since $\Reveal$ is deterministic, we can take an inefficient algorithm $A^*$ which always chooses the right $A_\rho$ to run on any broken $\rho$ and always convinces the the verifiers. Thus, $\Reveal_{A^*\interact V}(1^\secparam,\rho,L^*,t^*)$ accepts with probability $q_{L^*,t^*}$ over $\rho\gets{\rm Commit}_{P^*\interact V}$. 
    Let $q=(p-\negl(\lambda))/|S|$ for simplicity of notation. To restate the above, we have found a point $\alpha^*$ which is not occupied by any prover in $P^*$, such that commitments generated by $P^*$ can be opened to position $\alpha^*$ with probability at least $q$.

    Now by averaging, choose some $\rm pp^*$ and $c^*$ such that when we condition on $\pp=\pp^*$ and $c=c^*$ (the public parameters and commitment to the secret key, respectively) in the $\rm Commit$ phase, $\rho$ can still be opened to $\alpha^*$ with at least the same probability $q$. Let ${\rm sk},r$ be the unique values such that $c^*\gets\Com(\pp,{\rm sk},r)$ -- recall, we are treating $\Com$ as perfectly binding. Given this choice, let $\omega$ be the (possibly entangled) internal state across all provers in $P^*$ just after the commitment $c$ was announced in the $\rm Commit$ phase, conditioned on $c=c^*$ and $\pp=\pp^*$\footnote{This can be thought of as a quantum mixed state across the combined registers of each prover.}. We now describe the reduction, which takes $\omega,{\rm sk}$ as nonuniform advice:

    For all $i$, let $\Tilde{P}_i$ be a reduction algorithm positioned exactly as $P^*_i$ was in the $\rm Commit$ phase of the original cheating strategy -- that is, for all $t_{\rm init}\leq t\leq t_{\rm final}$, $\pos_t(\Tilde{P}_i)=\pos_t(P^*_i)$. $\Tilde{P}_i$ controls an instance of $P^*_i$, in the sense that it can read and write to $P^*_i$'s internal state and can intercept messages which are sent and received by $P^*_i$. We now have $\Tilde{P}$ participate in an execution of the vanilla position verification experiment $\Pi_{\alpha^*}$, with verifiers $\Tilde{V}$, whose positions are identical to the verifiers $V$ from the commitment protocol. At the start of the experiment, the provers in $\Tilde{P}$ set the joint state across the registers of all of the simulated provers in $P^*$ to the advice state $\omega$. The verifiers execute $\Pi_{\alpha^*}$, and the provers begin simulating the execution of the cheating strategy $P^*$. Whenever some prover $\Tilde{P}_i$ detects that its simulation of $P^*_i$ is sending a message $(\alpha,m)$ and $\alpha=\alpha^*$, it intercepts this and computes $m'={\rm Dec}({\rm sk},m)$. $\Tilde{P}_i$ then forwards $m'$ to the intended verifier.

    We now analyze the success probability of this reduction. In the commitment experiment, in order for $P^*$ to have generated a commitment which can open to $\alpha^*$ after reaching the state $\omega$, it must have sent a set of messages of the form $(\alpha^*,\Enc(\dots))$ during the $\rm Commit$ phase, whose decryptions under $\rm sk$ would cause the verifiers for $\Pi_{\alpha^*}$ to accept. This is clear from the verification process in the protocol, along with the fact that the provers had output $c^*$ which is perfectly binding to $\rm sk$. Now since $\Tilde{P}$ sends exactly $P^*$'s messages but decrypted under $\rm sk$, then $\Tilde{V}$ must accept whenever $P^*$ would have generated such a commitment. Thus, the reduction convinces the verifiers $\Tilde{V}$ with exactly probability $q$. 

    Now, note that after we added the advice state $\omega$ to the cheating strategy $\Tilde{P}$, the amount of preshared entanglement increased to be at most $E$ qubits. Since we assumed that $\Pi_{\alpha^*}$ has position security $s(\secparam)$ against this class of prover, we have shown that our position commitment protocol has position binding $|S|( s(\secparam)+\negl(\secparam))$ against spoofers with $E/2$ qubits of preshared entanglement.
\end{proof}
\subsection{Relationship to Other Quantum Cryptographic Primitives}
\begin{proposition}\label{thm:equivalence}
    Assume there exists a position commitment scheme $\mathcal{C}=(P,V,S)$ with completeness $c(\secparam)$, statistical position binding $s(\secparam)$ against some class $\mathscr{C}$ of adversaries, and computational hiding security. Then, the following hold:
    \begin{enumerate}
        \item There exists a (not necessarily nice) singleton quantum position verification protocol for each point in $S$, with completeness $c(\secparam)$ and with position security $s(\secparam)$ against $\mathscr{C}$.
        \item If there are no computational limits on $\mathscr{C}$, then there exist quantum bit commitments with completeness $c(\secparam)$, $s(\secparam)$-statistical binding, and honest-verifier computational hiding against semi-honest receivers.
    \end{enumerate}
\end{proposition}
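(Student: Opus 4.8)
The plan is to establish both items by direct reductions to the three guarantees of the given position commitment scheme $\mathcal{C}=(P,V,S)$. For item 1, fix a point $\alpha\in S$ and define the singleton protocol $\Pi_\alpha=(P_\alpha,V,\{\alpha\})$ to be: run the $\Commit$ phase of $\mathcal{C}$, then immediately run the $\Reveal$ phase on the claimed position $\alpha$, and have the verifiers accept iff $\Reveal$ accepts. Completeness error $c(\secparam)$ is immediate, since $P_\alpha$ occupies $\alpha$ and so by the completeness of $\mathcal{C}$ the honest $\Reveal$ on $\alpha$ accepts with probability at least $1-c(\secparam)$. For position security against $\mathscr{C}$, let $P^*\in\mathscr{C}$ be a coalition none of whose provers occupies $\alpha$. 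By statistical position binding, with probability at least $1-s(\secparam)$ over $\rho\leftarrow\Commit_{P^*\interact V}(1^\secparam)$ there is a point $\alpha_\rho$ occupied by some prover in $P^*$, so $\alpha_\rho\neq\alpha$; then property (b) of the binding definition, applied to $\alpha'=\alpha$, says that \emph{every} (computationally unbounded) $\Reveal$-phase strategy---in particular the one run by $P^*$---causes the verifiers to accept with probability $0$. A union bound over these two events bounds the total acceptance probability by $s(\secparam)$. The resulting protocol inherits the two-phase structure of $\mathcal{C}$ and need not be nice, as the statement allows.

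For item 2, assume $|S|\ge 2$ (otherwise a bit commitment cannot be extracted and the claim is read as vacuous) and fix two distinct committable points $\alpha_0=(L_0,t_0)$ and $\alpha_1=(L_1,t_1)$. We build a quantum bit commitment by ``simulating the geometry'' of $\mathcal{C}$: to commit to a bit $b$, the committer runs the honest prover $P_{\alpha_b}$ and the receiver runs all the verifiers $V=\{V_1,\dots,V_k\}$, with the schedule of message exchanges fixed to that of an honest execution of $\mathcal{C}$---which is $b$-independent by the hiding property---and the commitment is the receiver's internal state at time $t_{\rm final}$; to open, the committer runs the $\Reveal$ phase of $\mathcal{C}$ on the claimed position $\alpha_b$. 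Completeness error $c(\secparam)$ is inherited directly. For honest-verifier computational hiding against a semi-honest receiver: such a receiver follows the protocol, so its view is exactly $\View_V^{t_{\rm final}}(P_{\alpha_b}\interact V)(1^\secparam)$, which by the hiding of $\mathcal{C}$ is computationally indistinguishable from $\Sim(1^\secparam,t_{\rm final})$ and hence, by a hybrid argument, the $b=0$ and $b=1$ views are indistinguishable from each other.

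It remains to show statistical binding. Suppose a computationally unbounded cheating committer $C^*$ produces, with probability exceeding $s(\secparam)$, a receiver state that it can subsequently open both to $0$ and to $1$. Because $C^*$ plays on the message schedule of the honest prover $P_{\alpha_0}$, a physical prover $P^*_1$ placed along $P_{\alpha_0}$'s worldline (hence occupying $\alpha_0$) and running $C^*$'s algorithm reproduces, against the real verifiers $V$, exactly the interaction that $C^*$ has with the simulated receiver; since $\mathscr{C}$ is computationally unrestricted, $P^*=\{P^*_1\}\in\mathscr{C}$. Consequently $\Commit_{P^*\interact V}(1^\secparam)$ outputs, with probability exceeding $s(\secparam)$, a state that some unbounded $\Reveal$-phase strategies open to $\alpha_0$ and others open to $\alpha_1$; but $\alpha_0\neq\alpha_1$, so no candidate point $\alpha_\rho$ can satisfy property (b) for both of them, contradicting statistical position binding with error $s(\secparam)$. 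The only genuinely delicate step---and the one place ``no computational limits on $\mathscr{C}$'' is used---is this last realization argument: one must justify that the fixed message schedule of the simulated-geometry bit commitment is faithfully carried out by a real prover sitting at $L_0$ (which holds because the honest $b=0$ committer is literally $P_{\alpha_0}$, whose worldline may be taken to pass through $\alpha_0$), and that it is precisely this prover's physical occupation of $\alpha_0$ that makes the uniqueness clause of position binding forbid the coexisting openings to $\alpha_0$ and $\alpha_1$. Everything else is a routine translation of the completeness and hiding guarantees of $\mathcal{C}$.
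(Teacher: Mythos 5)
Your argument for item 1 matches the paper's in substance and is fine. For item 2, however, there is a genuine gap in how you pick the two representative spacetime points, and the flaw is precisely at the step you flagged as ``routine.'' You take $\alpha_0=(L_0,t_0)$ and $\alpha_1=(L_1,t_1)$ to be \emph{any} two distinct committable points, and assert that the schedule of message exchanges in the simulated-geometry protocol is ``$b$-independent by the hiding property.'' That is not what hiding gives you. Hiding only says the \emph{verifiers'} view is indistinguishable across $\alpha\in S$; it says nothing about the schedule of messages that arrive at the \emph{prover's} location. If $L_0\neq L_1$, the set of verifier messages (in particular any directional ones) that pass through the worldline at $L_0$ at time $t$ is in general different from the set that passes through $L_1$ at time $t$. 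So your simulated receiver faces a dilemma: if it feeds the committer the messages along the $L_0$ worldline, then the honest $b=1$ committer running $P_{\alpha_1}$ is handed the wrong inputs and its outputs no longer reproduce $\View^{t_{\rm final}}_V(P_{\alpha_1}\interact V)$, breaking your completeness and hiding claims; if instead it feeds the committer the messages along the $L_{b}$ worldline, the schedule depends on $b$ and a semi-honest receiver can read off $b$ from which messages it is asked to deliver, again breaking hiding. The same mismatch infects your binding reduction: a physical prover placed on $P_{\alpha_0}$'s worldline cannot, for $b=1$, faithfully reproduce an honest execution targeting $\alpha_1$ unless both worldlines coincide spatially.

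The paper closes this gap by the one design choice you omitted: it fixes a single spatial point $L$ and uses $\alpha_0=(L,t_0)$, $\alpha_1=(L,t_1)$ with the \emph{same} spatial location and different times. Then a prover simulated at position $L$ has literally the same worldline (hence receives the same stream of verifier messages) regardless of $b$, so the simulated round structure, in which $V$ forwards $P$ exactly the messages that would arrive at $(L,t)$ and $P$ replies with what it would send at time $t$, is manifestly $b$-independent. This is also what makes your physical-prover reduction go through: any cheating committer can be dropped onto the $L$-worldline and its interaction with the simulated receiver reproduced exactly, with both purported openings $\alpha_0$ and $\alpha_1$ lying on that single worldline, so position binding applies directly. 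To repair your write-up, replace ``fix two distinct committable points $\alpha_0=(L_0,t_0)$, $\alpha_1=(L_1,t_1)$'' with the paper's choice of a common spatial point, and delete the appeal to the hiding property as the justification for $b$-independence of the schedule (it is $b$-independent because the spatial location is the same, not because the scheme is hiding). The rest of your item 2 argument then works.
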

\begin{proof}
We will start with point 1. The position verification protocol $\Pi=(P,V,\{(L,t)\})$ is defined as follows: $P$ and $V$ run the $\Commit$ phase of $\mathcal{C}$, and then proceed immediately to the $\Reveal$ phase. The hiding property is no longer important, but it is immediately clear that the completeness and position binding of $\mathcal{C}$ are inherited directly as the completeness and position security of $\Pi$, respectively.

We move on to point 2, constructing an interactive protocol for (semi-honest receiver) quantum commitments. Taking some arbitrary spatial point $L$, designate two fixed spacetime points $(L,t_0),(L,t_1)\in S$ to represent bits 0 and 1, respectively. To commit to $b\in\{0,1\}$, the sender (who we will call $P$, since it takes the place of the prover) will engage the receiver (who we will call $V$, since it takes the place of the verifier) in a simulated version of the $\Commit$ phase of $\mathcal{C}$, where it commits to its position $(L,t_b)$. The simulation will take place in the (untimed, un-positional) interactive-protocol setting as follows. There is a round of interaction for each timestep $t$ of the $\Commit$ phase:
\begin{itemize}
    \item $V$ sends $P$ any messages \textbf{which would arrive} at $(L,t)$.
    \item $P$ sends $V$ any messages \textbf{which it would send} at time $t$.
\end{itemize}
The output of this protocol is the commitment state $\rho$. To reveal, the parties simply run $\Reveal$ in the straightforward way.

We now argue for the completeness and hiding properties. The reason for the asymmetry, where $V$ sees $P$'s messages slightly \emph{ahead} of schedule, is because we assume $V$ behaves semi-honestly and so cannot use this information to distort the protocol. Notice, then, that $P$'s view at any time $\tau$ is identical to what it would have been in the protocol -- this is true no matter which bit $P$ commits to, since it's simulating a prover at the same spatial position in either case. Given that the prover's final view is identical to that of $\Commit$, it also follows that the verifier's final view is identical to that of $\Commit$. From this it is easy to see that the protocol has completeness $c(\secparam)$. Additionally, the verifier's view is trivially simulatable by removing the timestamps from the simulator provided by the honest-verifier computational hiding security of $\mathcal{C}$. The semi-honest-receiver computational hiding property follows.

The statistical binding property follows in a similar fashion. If it was violated, we could construct a prover for $\mathcal{C}$ which breaks position binding with the same probability, by sending exactly the messages that are sent in the simulation above.
\end{proof}
In the case of quantum commitments, luckily, we can leverage some nice equivalence results to strengthen the properties relative to what we showed above.
\begin{theorem}[\cite{yan22general}]
    Any (interactive) quantum bit commitment scheme whose purification is semi-honest secure can be converted into a non-interactive quantum bit commitment scheme with perfect completeness, and whose hiding and binding properties are exactly preserved.
\end{theorem}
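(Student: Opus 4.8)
The plan is to remove interaction in two stages --- commit first, then reveal --- checking at each stage that the hiding and binding errors are untouched and that completeness becomes perfect. Throughout I would work with the \emph{purification} of the scheme: replace every private coin and intermediate measurement of the honest committer and honest receiver by coherent ancillas and unitaries, so that the honest commit interaction becomes a fixed alternating sequence of local unitaries $U^S_1, U^R_1, U^S_2, \ldots$ with a quantum message register passed back and forth, terminating in a global pure state $\ket{\psi_b}_{DC}$ (committer register $D$, receiver register $C$). This is exactly where the hypothesis is used: purifying only hands extra power to a curious receiver (for hiding) and to a deviating committer (for binding), so assuming ``the purification is semi-honest secure'' lets me pretend the honest parties were coherent all along without losing anything.

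\textbf{Making the commit non-interactive.} The point is that in the purified honest protocol every receiver unitary $U^R_i$ is a \emph{fixed, input-independent} map acting only on the incoming message register and fresh $\ket{0}$ ancillas. Hence the committer can run the receiver's half of the commit interaction inside its own head: initialize a private ``receiver register'' to $\ket{0}$ and apply $U^S_1, U^R_1, U^S_2, \ldots$ in order, all locally. The outcome is precisely the state $\ket{\psi_b}_{DC}$ of the honest interactive protocol; the committer then sends register $C$ in a single shot and keeps $D$. Letting $Q_b$ denote the unitary implementing $\ket{0\cdots 0}\mapsto \ket{\psi_b}_{DC}$, the receiver's post-commit state is $\rho_b := \Tr_D \proj{\psi_b}$, identical to the interactive scheme, and since a malicious receiver of the new scheme sees only the one message $C$, its advantage in guessing $b$ equals the (computational) trace distance between $\rho_0$ and $\rho_1$ --- the very quantity bounded by semi-honest-receiver hiding of the original. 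So hiding is preserved verbatim.

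\textbf{Making the reveal non-interactive, and preserving binding.} For the reveal I would have the committer send $b$ together with register $D$, and have the receiver apply $Q_b^\dagger$ to $(D,C)$ and accept iff the result is $\ket{0\cdots 0}$. On an honest transcript $(D,C) = \ket{\psi_b} = Q_b\ket{0\cdots 0}$, so the receiver accepts with certainty: completeness is now perfect, regardless of the original completeness error. For binding, the tool is Uhlmann's theorem: against a malicious committer that sends an arbitrary register $C$ with side register $D^*$ in joint state $\sigma_{CD^*}$, the optimal probability of later being accepted on an opening to $b$ equals the fidelity $\mathrm{F}(\sigma_C, \rho_b)$, since the best the committer can do in the reveal phase is apply the channel on $D^*$ that pulls $(D,C)$ as close as possible to $\ket{\psi_b}$. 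Consequently the two optimal opening probabilities cannot both be large unless $\mathrm{F}(\rho_0,\rho_1)$ is large (a triangle-type inequality for fidelity), and $\mathrm{F}(\rho_0,\rho_1)$ is exactly the quantity controlled by the semi-honest binding of the purified interactive scheme --- whatever its interactive reveal does, against the honest receiver it can do no better than an arbitrary uncompute of the commit state. Hence the canonical scheme's binding error matches the original's, and $(Q_0,Q_1)$ is non-interactive, perfectly complete, and has the same hiding and binding errors as the purified, semi-honest original.

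\textbf{The main obstacle.} The delicate step is the last one: one must verify that swapping the original --- possibly intricate, possibly imperfectly complete --- reveal protocol for the canonical ``uncompute $Q_b^\dagger$'' reveal neither strengthens nor weakens binding. This hinges on the Uhlmann-type identity expressing the optimal cheating-open probability as a fidelity, together with a careful argument that the semi-honest binding notion of the interactive scheme is captured by precisely $\mathrm{F}(\rho_0,\rho_1)$, so that no adversarial leverage is created or destroyed in the reduction. The remaining ingredients --- the purification step and the ``committer simulates the honest receiver'' trick for the commit phase --- are routine.
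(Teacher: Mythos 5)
The paper does not prove this theorem at all---it is stated purely as a citation to~\cite{yan22general} and used as a black box. (The only surrounding text is the two-sentence observation that the position commitment's purification remains semi-honest hiding and binding, so the cited theorem applies.) There is therefore no in-paper proof to compare your attempt against; the relevant comparison is with Yan's argument itself.

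At that level your sketch does track the overall architecture of Yan's proof: purify the honest interaction into a fixed alternating sequence of local unitaries; note that the receiver's purified unitaries are input-independent, so the committer can run them in its head, collapsing the commit phase to a single message register $C$ with associated unitary $Q_b$; preserve hiding because the receiver's reduced state $\rho_b = \Tr_D \proj{\psi_b}$ is unchanged; and control binding via Uhlmann's theorem relating the optimal opening probability to $\fidelity(\sigma_C,\rho_b)$. The weak point is exactly the one you flag as ``the main obstacle.'' Your fidelity-triangle argument shows something like sum-binding of the canonical scheme controlled by $\fidelity(\rho_0,\rho_1)$, but the theorem asserts that binding is ``\emph{exactly} preserved.'' Making that precise requires fixing which flavor of quantum binding is meant (honest-binding, sum-binding, statistical binding---these are inequivalent in general) and proving that the interactive scheme's semi-honest binding error coincides with the canonical scheme's error under that flavor, which is where Yan does substantial additional work beyond the Uhlmann step. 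As written, your sketch gestures at, but does not supply, that matching; it would establish preservation up to a loss, not exact preservation.
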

In our case, the purified version would maintain hiding (since the prover keeps the purified secret key privately) and binding (purification cannot help break classical (post-quantum secure) commitments in any way), so we can apply the theorem.
\begin{corollary}
    Take the assumptions in \Cref{thm:equivalence}. Then, there exists a non-interactive quantum bit commitment scheme with perfect completeness, $s(\secparam)$-statistical binding, and computational hiding.
\end{corollary}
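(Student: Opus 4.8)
The plan is to chain together the two results stated immediately before the corollary. First, I would invoke point~2 of \Cref{thm:equivalence}: taking the hypothesized position commitment scheme $\mathcal{C}=(P,V,S)$ with the adversary class $\mathscr{C}$ being computationally unbounded (so that the statistical-binding reduction in that proof applies), we obtain an \emph{interactive} quantum bit commitment scheme with completeness $c(\secparam)$, $s(\secparam)$-statistical binding, and computational hiding against semi-honest receivers. Concretely this is the ``one round per timestep'' simulation of the $\Commit$ phase of $\mathcal{C}$ described in the proof of \Cref{thm:equivalence}, where the sender commits to one of two designated spacetime points $(L,t_0),(L,t_1)$.

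Next, I would check the hypothesis needed to apply the conversion theorem of \cite{yan22general}, namely that the \emph{purification} of this interactive scheme is semi-honest secure. The point is that purifying the honest sender and receiver---deferring all measurements and keeping all ancilla registers coherent---weakens neither property. Hiding is unaffected because the only registers the purified sender retains but never transmits are essentially the encryption randomness and the secret key $\sk$ (together with the randomness used in the classical commitment $c$); the receiver's reduced state is therefore unchanged, and in particular the honest-verifier simulator supplied by the hiding property of $\mathcal{C}$ (with timestamps stripped, as in the proof of \Cref{thm:equivalence}) still works against the purified receiver. Binding is unaffected because in the construction of \Cref{def:pc-construction} the binding guarantee ultimately rests on the \emph{statistically binding} classical commitment to $\sk$ and on position binding of $\mathcal{C}$; a coherent sender gains no advantage against a post-quantum-secure statistically binding commitment, so no new attack becomes available, and the binding error stays $s(\secparam)$ (up to the negligible slack already absorbed in $\mathcal{C}$'s binding).

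Finally, I would apply the theorem of \cite{yan22general} to this purified, semi-honest-secure interactive scheme. That theorem outputs a \emph{non-interactive} quantum bit commitment with perfect completeness and with the hiding and binding properties exactly preserved; hence we get perfect completeness, $s(\secparam)$-statistical binding, and computational hiding, which is exactly the claim. The only nonroutine step is the middle one---verifying that purification does not break semi-honest security---and even there the hiding part is the standard ``sender keeps the purifying register'' argument, while the binding part only requires the observation that the sources of binding in \Cref{def:pc-construction} are themselves insensitive to whether the adversary is coherent. Everything else is a direct citation.
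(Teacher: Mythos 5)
Your proposal is correct and follows the same route as the paper: instantiate \Cref{thm:equivalence}~(2) with a computationally unbounded $\mathscr{C}$, argue that purification preserves semi-honest hiding (sender retains the purifying registers, so the receiver's reduced state is unchanged) and binding (coherence gives no advantage against the post-quantum statistically binding classical commitment), then apply \cite{yan22general}. The only difference is that you spell out the purification argument in more detail than the paper's one-sentence justification, but the substance is identical.
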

\subsection{Optimizations for Concrete Constructions}
\label{sec:optimization}
In \Cref{sec:pc-construction}, we showed how to construct position commitments from any nice position verification protocol. However, 
%we readily point out that  the construction in \Cref{sec:pc-construction} 
our construction is quite computationally demanding on both the prover and verifier algorithms, which puts some strain on our modeling assumption of instantaneous computation. In particular, it seems challenging for the algorithms to run a position verification protocol for every point in the committable set \emph{simultaneously}. We would prefer a solution which only requires a small, constant amount of computation per timestep. 

We will now informally describe an optimized construction which can hopefully serve as evidence for the real-world implementability of position commitments and, subsequently, zero-knowledge position proofs. We will require a couple of additional properties from the underlying position verification protocol, on top of being nice: the verifiers' messages must be \textbf{1.} entirely classical, and \textbf{2.} generated independently of one another. We note that this is satisfied, for example, by the protocol of \cite{liu2022beating}\footnote{The main protocol of \cite{liu2022beating} is only discussed in the 1-D setting, but they give another protocol in higher dimensions which is conjectured to be secure, and also satisfies the constraints of our optimization}. The first property is so that the messages can be broadcast, instead of directional (since there is no meaningful notion of ``broadcasting'' a quantum state). The second property allows the verifiers to build a mesh of challenges throughout the spacetime region, as pictured in \Cref{fig:2d-pc-optimized}. Wherever a prover is, there will be some combination of independently sampled challenges arriving at (or very near) that spatial point at some time, which the prover can then respond to under encryption as before. As for the prover's task of pretending to be in all other points at once: observe that as long as the verifiers to receive at least one encrypted message at each timestep of the protocol, they cannot rule out the presence of a prover at any given position. On the other hand, a prover can easily induce such a state by sending each verifier one encrypted message per timestep (where, similarly to \Cref{def:pc-construction}, it sends directional messages delayed appropriately so that all verifiers receive the first one and the last one at the same time). The optimizations are informally described below.
\theoremstyle{plain}
\newtheorem{protocolx}[algorithm]{Protocol}
\begin{protocolx}[Optimized Encrypt-Then-Verify Scheme]
\label{def:optimized-pc}
    We define this protocol informally by specifying the main loop of the provers and verifiers, and where everything outside of this loop proceeds in roughly the same way as \Cref{def:pc-construction}.
    \begin{itemize}
        \item The verifiers do the following \emph{at each timestep $t$}: each verifier $V_i$ generates a challenge string $x_{i,t}\in\{0,1\}^n$, chosen from the verifiers' shared randomness, and broadcasts it.
        \item The honest prover at location $L$, at each timestep $t$, checks whether it received a set of verifier challenges $(x_1,\dots,x_{d+1})$. If not, $P$ sends $\Enc(\bot)$ to each verifier. If it did, then $P$ computes some output $y(x_1,\dots,x_{d+1})$ and sends ${\rm Enc}({\rm sk},y)$ to each verifier. The prover delays these messages appropriately so that they are received in lockstep by all verifiers simultaneously.
    \end{itemize}
\end{protocolx}
We claim that \Cref{def:optimized-pc} is complete for all spacetime points where the verifiers' broadcast messages intersect. We provide a basic visual intuition in \Cref{fig:2d-pc-optimized}, and it is not hard to argue intuitively that, with a message broadcast at every timestep there is a fine mesh of committable points induced over the convex hull of the verifiers' positions. The arguments for hiding and binding of this commitment follow essentially in just the same way as \Cref{thm:construction-secure} -- the verifiers' view is always a constant stream of encrypted messages, and the prover is bound by its commitment to the secret key and by the security of the underlying position verification scheme. As for the optimization: by observing the algorithm for honest parties, it is clear that the \emph{per-timestep} computational load on all parties in this protocol is a quantity which does not grow with the number of committable points $|S|$, and is polynomial in the security parameter.
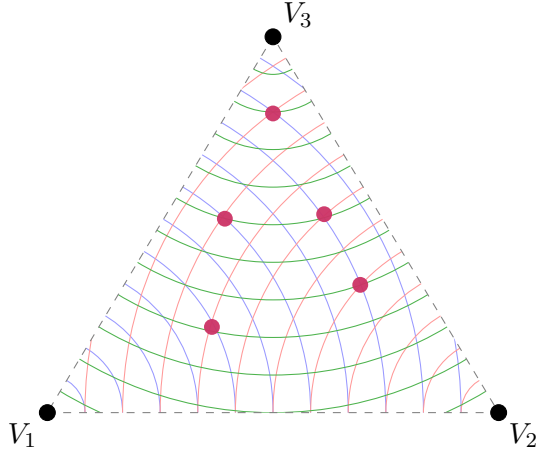
\begin{figure}[ht]
    \centering
    \begin{tikzpicture}[scale=2.0]
        % --- Define Vertices ---
        \coordinate (A) at (0,0);
        \coordinate (B) at (3,0);
        \coordinate (C) at (1.5, 2.5);

        % --- Create a scope where everything will be clipped to the triangle ---
        \begin{scope}
            % Define the triangle as the clipping path.
            \clip (A) -- (B) -- (C) -- cycle;

            % --- Draw the full circles with reduced spacing ---
            % Arcs from A (blue)
            \foreach \r in {0.25, 0.5, ..., 3.25} {
                \draw[blue!40, thin] (A) circle (\r);
            }
            % Arcs from B (red)
            \foreach \r in {0.25, 0.5, ..., 3.25} {
                \draw[red!40, thin] (B) circle (\r);
            }
            % Arcs from C (green)
            \foreach \r in {0.25, 0.5, ..., 3.25} {
                \draw[green!40!gray, thin] (C) circle (\r);
            }
        \end{scope}

        % --- Draw the Triangle Boundary (on top of arcs) ---
        \draw[gray, dashed, thin] (A) -- (B) -- (C) -- cycle;
        
        % --- Label the Vertices ---
        \node[below left] at (A) {$V_1$};
        \node[below right] at (B) {$V_2$};
        \node[above right] at (C) {$V_3$};
        \filldraw[black] (A) circle (1.5pt);
        \filldraw[black] (B) circle (1.5pt);
        \filldraw[black] (C) circle (1.5pt);

        % --- Highlight Triple Intersection Points (precisely re-aligned) ---
        \fill[purple!80, opacity=0.95] (1.094, 0.57) circle (1.5pt);
        \fill[purple!80, opacity=0.95] (1.180, 1.29) circle (1.5pt);
        \fill[purple!80, opacity=0.95] (1.500, 1.99) circle (1.5pt);
        \fill[purple!80, opacity=0.95] (1.84, 1.32) circle (1.5pt);
        \fill[purple!80, opacity=0.95] (2.08, 0.85) circle (1.5pt);

    \end{tikzpicture}
    \caption{Illustration of the optimized position commitments in \Cref{def:optimized-pc} for $d=2$, with intersection points highlighted in red. When these arcs are as dense as one per timestep, the set of intersection points will be a fine-grained mesh.}
    \label{fig:2d-pc-optimized}
\end{figure}
\section{Zero-knowledge Position Verification}\label{sec:zklp}
As outlined in \Cref{sec:zklp-intro}, the main goal of this paper is to construct secure position proofs which are zero-knowledge over some nontrivial region of spacetime, in the following sense:
\begin{definition}[Zero-Knowledge Property for Position Verification]
    Let $\Pi=(P,V,R)$ be a position verification protocol, where $P$ is a family $\{P_\alpha\}_{\alpha\in R}$ of honest prover algorithms. We say that $\Pi$ has the \textbf{honest-verifier, computational zero-knowledge} property over $R$ if the following is true. There exists a QPT simulator $\Sim$ such that for any $\alpha\in R$, and any time $\tau$,
    \[ \Big \{ \Sim(1^\secparam,\tau) \Big \}_{\secparam \in \N} \approx_c \Big \{ \View^\tau_V(P_\alpha \interact  V)(1^\secparam) \Big \}_{\secparam \in \N} \]
\end{definition}
We notice that the zero-knowledge property is very similar to the hiding property of position commitments. Naturally, we will see that we can construct zero-knowledge position proofs by proving statements about the potential reveal outcomes of a position commitment.
\begin{lemma}\label{thm:zklp-construction}
    Suppose $\mathcal{C}=(P,V,S)$ is a mostly-classical 
     position commitment scheme with completeness error $c(\lambda)$, statistical position binding $s(\lambda)$ against some prover class $\mathscr{C}$, and computational honest-verifier hiding. Then, for any nonempty region $R\subset S$, there exists a position verification protocol $\Pi=(P,V,R)$ with completeness $c(\lambda)$, position security $s(\lambda)+\negl(\lambda)$ against $\mathscr{C}$, and honest-verifier computational zero knowledge.
\end{lemma}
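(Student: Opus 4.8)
The plan is to follow the classical commit-then-prove template, using the position commitment $\mathcal C$ in place of an ordinary commitment. Fix a nonempty $R\subset S$. Because $\mathcal C$ is mostly-classical, its $\Reveal$ phase is a classical protocol in which the verifiers flip no coins, so it is equivalent to having the prover send a single message $m$ after which the verifiers apply a deterministic predicate; write $\mathsf{Check}(1^\secparam,\rho,\alpha^*,m)$ for that predicate. Define the $\mathsf{NP}$ language
\[
  \mathcal L_R=\bigl\{(1^\secparam,\rho)\ :\ \exists\,\alpha^*\in R,\ \exists\,m,\ \mathsf{Check}(1^\secparam,\rho,\alpha^*,m)=1\bigr\},
\]
with witness $(\alpha^*,m)$, which has polynomial size since $\Reveal$ is efficient. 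The protocol $\Pi=(P,V,R)$ is then: the parties run $\Commit_{P_\alpha\interact V}(1^\secparam)$ exactly as in $\mathcal C$ (legal because $\alpha\in R\subseteq S$), producing the classical commitment state $\rho$ held by $V$; then, with no further timing or positional constraints, $P_\alpha$ and $V$ execute a post-quantum honest-verifier computational zero-knowledge proof system for $\mathcal L_R$ (which exists under post-quantum one-way functions, as recalled in \Cref{sec:preliminaries}) on instance $(1^\secparam,\rho)$, with $P_\alpha$ using as witness the honest opening of $\rho$ to its own point $\alpha$. The verifiers accept iff the zero-knowledge verifier accepts.

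Completeness is immediate: by completeness of $\mathcal C$, with probability at least $1-c(\secparam)$ the $\Commit$ phase yields a $\rho$ whose honest opening to $\alpha$ satisfies $\mathsf{Check}$, i.e.\ $(1^\secparam,\rho)\in\mathcal L_R$ and $P_\alpha$ holds a valid witness, whereupon perfect completeness of the inner proof makes $V$ accept. For position security, let $P^*\in\mathscr C$ with $\pos(P^*_i)\notin R$ for all $i$. With probability at least $1-s(\secparam)$ over $\rho\gets\Commit_{P^*\interact V}$, statistical position binding provides a point $\alpha_\rho$ that some $P^*_i$ physically occupied -- hence $\alpha_\rho\notin R$ -- and such that no $\Reveal$ strategy accepts at any $\alpha'\neq\alpha_\rho$; thus $\mathsf{Check}(1^\secparam,\rho,\alpha^*,\cdot)\equiv 0$ for every $\alpha^*\in R$ and $(1^\secparam,\rho)\notin\mathcal L_R$. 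Since the inner proof carries no timing constraints, the coalition may pool its joint state after $\Commit$ and play it as a single (possibly unbounded) prover, so soundness of the zero-knowledge proof bounds its acceptance probability by $\negl(\secparam)$ on this event; averaging gives position security $s(\secparam)+\negl(\secparam)$.

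For the zero-knowledge property, I would build $\Sim(1^\secparam,\tau)$ by composing the two simulators. If $\tau\leq t_{\rm final}$, output the hiding simulator $\Sim_{\mathcal C}(1^\secparam,\tau)$. If $\tau>t_{\rm final}$, run $\Sim_{\mathcal C}(1^\secparam,t_{\rm final})$, read off the classical simulated commitment $\rho_{\rm sim}$ from its output, run the inner zero-knowledge simulator on instance $(1^\secparam,\rho_{\rm sim})$, truncate its transcript to the portion exchanged by time $\tau$, and append it. Indistinguishability from $\View^\tau_V(P_\alpha\interact V)(1^\secparam)$ follows from a short hybrid: first swap the honest inner prover for the inner simulator run on the \emph{real} $\rho$ -- this changes the distribution by at most an additive $c(\secparam)$, namely the probability that $\Commit$ fails to produce a witness, off of which $(1^\secparam,\rho)\in\mathcal L_R$ and the inner zero-knowledge guarantee applies; then swap the real $\Commit$ view for $\Sim_{\mathcal C}$ -- this is indistinguishable by computational hiding of $\mathcal C$, because everything downstream (extract $\rho$, run the inner simulator, truncate) is an efficient post-processing, so a distinguisher here yields one for hiding directly. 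The endpoint of this hybrid is exactly $\Sim$, giving $\Sim(1^\secparam,\tau)\approx_c\View^\tau_V(P_\alpha\interact V)(1^\secparam)$ whenever $c(\secparam)$ is negligible (as it is in all our instantiations).

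The main subtlety is precisely the second hybrid swap: the inner zero-knowledge simulator is only guaranteed to work on instances inside $\mathcal L_R$, yet the simulated commitment $\rho_{\rm sim}$ need not admit \emph{any} valid opening, so it may well lie outside $\mathcal L_R$. The resolution -- running the inner simulator obliviously on whatever commitment the outer layer hands it and reducing the swap to the hiding property of $\mathcal C$ rather than to the zero-knowledge property -- is the standard fix for commit-then-prove, but it is the step one must be careful with; the only price is the additive completeness-error term, which is harmless once $c(\secparam)$ is negligible.
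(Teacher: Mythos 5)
Your proof is correct and follows essentially the same commit-then-prove template as the paper: run the $\Commit$ phase of $\mathcal{C}$, then run a post-quantum ZK proof for the NP language of commitments that open to some point in $R$, with completeness, position security, and zero knowledge each inherited from the corresponding property of $\mathcal{C}$ plus the inner ZK proof. Your NP language $\mathcal{L}_R$ with witness $(\alpha^*,m)$ (the prover's full reveal message) is an equivalent and slightly cleaner formulation than the paper's $\mathsf{Reveal}_R$, whose witness is $(\alpha,P^*)$ with $P^*$ a PPT reveal algorithm; the flattening-to-one-message step is justified exactly as you say, by the mostly-classical property making the reveal verifier deterministic. Your simulator is also the same as the paper's (compose $\Sim_{\mathcal{C}}$ with $\Sim_{\rm ZK}$, truncating at $\tau$). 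The one place you go further than the paper is the explicit two-step hybrid for the zero-knowledge indistinguishability, in particular flagging that (i) the simulated commitment $\rho_{\rm sim}$ need not be in $\mathcal{L}_R$, so the ZK guarantee cannot be applied to it directly and the swap must instead be reduced to hiding, and (ii) the first swap costs an additive $c(\secparam)$, which must therefore be negligible for the ZK property to hold. The paper's proof simply asserts the final indistinguishability without unpacking either point, so your version is a mild tightening; the $c(\secparam)$ caveat is a genuine implicit assumption in the lemma as stated, and you are right to surface it.
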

\begin{proof}
    Take such an $R\subset S$, and fix any $\alpha^*\in R$. Let $P_\mathcal{C}$ be some honest prover for $\mathcal{C}$ located at $\alpha^*$. We now describe the zero-knowledge position verification protocol, by defining the behavior of the honest prover $P=P_{\alpha^*}$ and the honest verifiers $V$.
    
    $P$ and $V$ first run $\rho\gets\rm Commit_{P_\mathcal{C}\interact V}(1^\lambda)$, generating a classical commitment state which $V$ then sends to the prover. $P$ will then initiate a zero-knowledge proof that $\rho$ belongs to an NP language ${\sf Reveal}_R$, which we define below. 
    \[\rho\in{\sf Reveal}_R\iff\exists(\alpha,P^*)\ s.t.\ \alpha\in R\ \land\ P^*\text{ is a PPT alg.}\ \land\ \Reveal_{P^*\interact V}(1^\lambda,\rho,\alpha)\text{ accepts}\]
    We know that under our assumptions, there exists a post-quantum computational zero-knowledge proof protocol ${\rm ZK}=(P_{ZK},V_{ZK})$ for ${\sf Reveal}_R$ with perfect completeness and negligible soundness error. So the protocol $\Pi$ continues as follows: $P$ and $V$ run $P_{ZK}$ and $V_{ZK}$, respectively, on common input $\rho$. $P_{ZK}$ is additionally given a witness $w=(\alpha^*,P_\Reveal)$, where $P_\Reveal$ is $P$'s code for the $\Reveal$ phase. Afterwards, $V$ accepts if and only if $V_{ZK}$ accepts.

    We claim that $\Pi$ has completeness error $c(\lambda)$: this follows from the fact that $\mathcal{C}$ has completeness error $c(\lambda)$ and $\rm ZK$ has perfect completeness. Next, we argue for $\Pi$'s position security: take a cheating prover coalition $P^*\in\mathscr{C}$ where nobody occupies position $\alpha^*$. By the statistical position binding of $\mathcal{C}$, except with probability $s(\lambda)$ over $\rho\gets\rm Commit_{P^*\interact V}(1^\lambda)$, we have that $\Reveal_{A\interact V}(1^\lambda,\rho,\alpha)$ rejects for all points $\alpha$ and all algorithms $A$. Thus, $\rho$ only belongs to the language ${\sf Reveal}_R$ with probability at most $s(\lambda)$. By the soundness of $\rm ZK$, this means that $V$ accepts with probability at most $s(\lambda)+\negl(\lambda)$.

    Finally, we describe the zero-knowledge simulator $\Sim$. Take ${\Sim}_\mathcal{C}$ to be the simulator guaranteed by the computational hiding property of the commitment scheme, and ${\Sim}_{\rm ZK}$ be the simulator for the protocol $\rm ZK$. On input $(1^\lambda,\tau)$, $\Sim$ will first run $\sigma={\Sim}_\mathcal{C}(1^\lambda,\min(t_{\rm final},\tau))$. If $\tau\leq t_{\rm final}$, $\Sim$ outputs $\sigma$. Otherwise, take the commitment state $\rho$ which is contained on the verifiers' output register in $\sigma$ (and let $\sigma'$ be the remainder of the verifiers' state). Next, $\Sim$ runs $\pi\gets{\Sim}_{\rm ZK}(1^\lambda,\rho)$ and outputs $(\sigma',\pi)$. As we have constructed $\Pi$, and by the assumed properties of ${\Sim}_\mathcal{C}$ and ${\Sim}_{\rm ZK}$, the output of ${\Sim}(1^\lambda,\tau)$ is computationally indistinguishable from $\View^\tau_V(P\interact V)(1^\lambda)$.
\end{proof}
\begin{theorem}[Main Result]\label{thm:main}
    Suppose there exist post-quantum one-way functions, and a nice position verification protocol with completeness $c(\secparam)$ and position security $s(\secparam)$ against spoofers sharing at most $E$ entangled qubits. Then, for any finite region $R\subset \R^d\times\R$, there exists a position verification protocol $\Pi=(P,V,R)$ with completeness $c(\secparam)$, position security $|S|(s(\secparam)+\negl(\secparam))$ against spoofers sharing $E/2$ entangled qubits, and honest-verifier computational zero knowledge.
\end{theorem}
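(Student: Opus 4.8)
The plan is to obtain the result by composing the two main lemmas: first use \Cref{thm:construction-secure} to turn the nice position verification protocol and the one-way function into a position commitment scheme, and then feed that commitment into \Cref{thm:zklp-construction} to obtain the zero-knowledge position verification protocol. So I would proceed as follows. Fix a finite region $R \subset \R^d \times \R$ (if $R = \emptyset$ the statement is vacuous), and set the committable set to be $S := R$. By the coverage property of niceness, instantiating the nice family at the finite region $S$ supplies verifier positions $X^S_1,\dots,X^S_k$ whose convex hull contains $S$ together with a nice singleton protocol $\Pi_\alpha$ for every $\alpha \in S$, each with completeness $c(\secparam)$ and position security $s(\secparam)$ against spoofers pre-sharing at most $E$ qubits; this is exactly the hypothesis of \Cref{thm:construction-secure}. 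Since post-quantum one-way functions exist, that lemma yields a mostly-classical position commitment scheme $\mathcal{C} = (P,V,S)$ with completeness error $c(\secparam)$, statistical position binding $|S|(s(\secparam)+\negl(\secparam))$ against spoofers pre-sharing at most $E/2$ qubits, and computational honest-verifier hiding.

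Next I would apply \Cref{thm:zklp-construction} to $\mathcal{C}$ with the nonempty region $R \subseteq S$. The only ingredient this lemma needs beyond what we already have is a post-quantum honest-verifier computational zero-knowledge proof system with perfect completeness and negligible soundness for the $\mathsf{NP}$ language $\mathsf{Reveal}_R$, which again follows from post-quantum one-way functions. This directly produces a position verification protocol $\Pi = (P,V,R)$ with completeness error $c(\secparam)$, position security $|S|(s(\secparam)+\negl(\secparam)) + \negl(\secparam) = |S|(s(\secparam)+\negl(\secparam))$ against spoofers pre-sharing at most $E/2$ qubits, and honest-verifier computational zero knowledge — which is the claimed statement.

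I do not expect a genuine obstacle here; the work is in the bookkeeping, which I would make sure is correct on three points. First, the entanglement budget is halved exactly once, not twice: the reduction inside \Cref{thm:construction-secure} absorbs the adversary's committed state $\omega$ (of size at most $E/2$) into the entanglement register and so must tolerate up to $E$ qubits, whereas the further reduction used inside \Cref{thm:zklp-construction} operates classically on the commitment transcript and consumes no entanglement, so the final class remains ``$E/2$ qubits.'' Second, the several negligible terms (statistical binding of the classical commitment, security of the secret-key encryption, soundness of the $\mathsf{NP}$ zero-knowledge proof) all collapse into the single $\negl(\secparam)$ in the stated bound, using $|S|\cdot\negl(\secparam) = \negl(\secparam)$ for a fixed finite $S$. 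Third, I would verify that chaining the two simulators — the commitment-hiding simulator $\Sim_{\mathcal{C}}$ run up to time $\min(t_{\mathrm{final}},\tau)$ and the $\mathsf{NP}$ zero-knowledge simulator $\Sim_{\mathrm{ZK}}$ afterwards, exactly as assembled in the proof of \Cref{thm:zklp-construction} — gives a single QPT simulator whose output is computationally indistinguishable from $\View^\tau_V(P_\alpha \interact V)(1^\secparam)$ for every $\alpha \in R$ and every time $\tau$, so that the zero-knowledge property holds uniformly over $R$ as required.
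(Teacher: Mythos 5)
Your proposal is correct and takes essentially the same approach as the paper, which proves \Cref{thm:main} in one line by instantiating \Cref{thm:construction-secure} to get the encrypt-then-verify position commitment scheme and then invoking \Cref{thm:zklp-construction}. Your additional bookkeeping (the choice $S:=R$, the single halving of the entanglement budget, the collapse of negligible terms, and the composition of the two simulators) is all consistent with the paper's lemma statements and simply spells out what the paper leaves implicit.
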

\begin{proof}
    Let $\mathcal{C}=(P,V,S)$ be the encrypt-then-verify position commitment scheme outlined in \Cref{sec:pc-construction}, instantiated with the above assumptions. Then, the existence of $\Pi$ as claimed, follows from \Cref{thm:zklp-construction}.
\end{proof}
\section{Towards Malicious-Verifier Zero Knowledge}
\label{sec:towards-malicious}

We have focused so far on the honest-verifier setting. We feel that this is a reasonable assumption in some settings, such as for example when the verifiers are cell towers; malicious behavior would likely be detectable by, or significantly impact quality of service to, a large customer base. However, in full generality, it would clearly be desirable to achieve zero-knowledge even against malicious verifiers. With this in mind, we will give some intuition for why the honest-verifier assumption may be difficult to remove, and then give some potential directions for relaxing it.

\paragraph{A General Attack on Privacy.} If we naively try to work within a universe with malicious verifiers, we run into a simple, but annoyingly effective, ``poke-around-in-the-dark'' attack on privacy. This attack, for any positional protocol $\Pi$ with good completeness, goes as follows: the verifiers \emph{only} send messages along a particular ray $\Vec{z}\in\R^d$, and then see whether they accept or reject at the conclusion of $\Pi$. The protocol will accept with high probability if and only if the honest prover is located along $\Vec{z}$! Equivalently, we could imagine that they send messages \emph{everywhere but} a particular direction, and now they will only reject if the prover's position was in that direction. The major obstacle in defending against these attacks in our model is that a party can only see messages which pass through their location; thus, no honest party can tell whether or not messages were sent directionally to other locations
 (much like the proverbial tree falling in a forest with no one around to hear it). 
% \tal{ Very tempted to quote "If a tree falls in the forest and no one is around to hear it, does it make a sound?"... is this somehow evidence that this is a deep inherent philosophical question?}\lo{I like it!}

Thus we claim (informally) that for any protocol $\Pi=(P,V,R)$ where in order for completeness to hold, $V$ must send directional messages 
to every point in $R$, %covering the whole space $R$, 
 the protocol cannot be zero-knowledge against malicious verifiers. This is due to the attack described above, where $V$ can ``deny service'' to some zone of $R$, and learn whether the prover was in this zone based on the eventual decision to accept/reject. 

\paragraph{Possible Directions}
As demonstrated above, it is clear that we cannot hope to achieve malicious-verifier privacy when the protocol hinges on the verifiers' use of directional messages. One might note, however, that our optimized protocol in \Cref{sec:optimization} utilizes verifier messages which are sent using entirely classical, broadcast signals. Note that the prover, however, still has to send directional messages to the verifiers, in order to mask their position. However, if we can reasonably assume a model where verifiers only send broadcast messages, but provers can send either directional or broadcast messages, then we could hope to prove zero-knowledge against malicious verifiers.
 The broadcast-only would be reasonable to assume if, say, 1. the verifiers are low-tech cell towers without directional capability, or even if 2. the verifiers are arbitrary, but the prover can \emph{differentiate} whether a signal it received was directional or broadcast -- if the prover simply ignores all directional messages, then we essentially collapse to the broadcast-only model.

%Thus, a natural question for future work might be, what can be accomplished in a broadcast-only model? Say, for example, that the verifier algorithms are run on cell towers which lack directional capability, and that our model of malicious verifiers limits them to run on the same technology.
%\tal{Should we mention that it's an issue with broadcasting quantum states?}\lo{Actually, I realized that we can totally get away with fully-classical if we use the protocol from \cite{liu2022beating}. Too late now to state a nice positive result, though.}

Another interesting area of investigation is to augment the model with more verifiers. We could consider having a redundant number of verifiers in each protocol, i.e. a considerable surplus to the $d+1$ which are normally required, and use an honest-majority assumption to see what privacy guarantees can be achieved. To motivate this model, we can again use the example of cell towers and observe that points in densely populated areas are likely to be in range of dozens of cell towers at once -- much more than the four that are required to perform 3-D position verification. We could imagine that with this surplus, the prover is free to randomize the choice of verifiers which it actually listens to, and thus with high probability decouple the accept/reject decision from the actions of any given verifier (or any minority coalition of dishonest verifiers).
 We note that in this model there is also an interesting twist to be considered -- since each verifier is surrounded by other verifiers, we could ask it to play the role of \emph{a prover}, and establish some property of its position to the rest of the network. An even simpler observation about this model is that the surrounding verifiers can audit the timings of its broadcast messages, which could be useful in the aforementioned discussion about achieving malicious-verifier zero-knowledge in the broadcast-only model.

% \subsection{Towards stronger position guarantees}
% \tal{Changed this from section 6 to subsection 5.2.  If we end up not having it, we could just have the whole section be about malicious-verifier, and add this discussion on the intro ?}

% \hnote{I think we should save this for after the deadline...}

\section*{Acknowledgments} 
Diagrams in this paper were made with the help of generative AI.
\iftrue
HY is supported by AFOSR award FA9550-23-1-0363, NSF awards CCF-2530159, CCF-2144219, and CCF-2329939, and by the Sloan Foundation.
\fi

\printbibliography
\end{document}